\documentclass[11pt]{article}

\usepackage[T1]{fontenc}
\usepackage{microtype}
\usepackage{amsmath,amsthm}
\usepackage{newpxtext,newpxmath}
\usepackage[margin=15pt,font=small,labelfont={bf}]{caption}
\usepackage[margin=1in]{geometry}
\usepackage[english]{babel}
\usepackage[utf8]{inputenc}

\usepackage[backend=biber,style=alphabetic,maxalphanames=3,minalphanames=3,maxbibnames=99,minbibnames=99]{biblatex}
\usepackage{authblk}
\usepackage{graphicx}
\usepackage{algorithm}
\usepackage[noend]{algpseudocode}

\AtBeginDocument{
  }
\usepackage{xcolor}
\usepackage[colorlinks=true,
    citecolor=darkgreen,
    linkcolor=darkred
]{hyperref}
\usepackage{aliascnt}
\definecolor{darkgreen}{rgb}{0,0.5,0.5}
\definecolor{darkred}{rgb}{0.5,0,0}
\newcommand\drop[1]{}
\newcommand*{\newaliastheorem}[3]{
  \newaliascnt{#1}{#2}
  \newtheorem{#1}[#1]{#3}
  \aliascntresetthe{#1}
  \expandafter\newcommand\csname #1autorefname\endcsname{#3}
}
\newtheorem{theorem}{Theorem}
\newaliastheorem{proposition}{theorem}{Proposition}
\newaliastheorem{lemma}{theorem}{Lemma}
\newaliastheorem{corollary}{theorem}{Corollary}
\newaliastheorem{definition}{theorem}{Definition}
\newaliastheorem{conjecture}{theorem}{Conjecture}
\newaliastheorem{example}{theorem}{Example}
\newaliastheorem{remark}{theorem}{Remark}

\addto\extrasenglish{

}

\title{The Balanced Four-Color Theorem}
\author[1,2]{Ken-ichi Kawarabayashi\thanks{Supported by JSPS Kakenhi JP26K21777, JP25K24465, and by JST ASPIRE JPMJAP2302.}}
\author[1]{Hirotaka Yoneda\thanks{Supported by JSPS JP25K24465, JST ASPIRE JPMJAP2302, and JST ACT-X JPMJAX25CT.}}
\author[1]{Masataka Yoneda\thanks{Supported by JSPS JP25K24465 and JST ASPIRE JPMJAP2302.}}
\affil[1]{The University of Tokyo, Tokyo, Japan}
\affil[2]{National Institute of Informatics, Tokyo, Japan}
\date{}

\begin{document}
\maketitle

\pagenumbering{gobble}

\begin{abstract}
    We show that every planar graph with $n \geq 3$ vertices admits a 4-coloring in which each color is used on fewer than $n/2$ vertices. This bound is the best possible. Moreover, such a coloring can be found in $O(n \log n)$ time. We also extend these results to five or more colors and to graphs on general surfaces.
\end{abstract}

\pagenumbering{arabic}
\setcounter{page}{1}

\section{Introduction}

Starting with the celebrated four-color theorem, which states that every planar graph is 4-colorable, the graph coloring problem, which asks for coloring each vertex of a graph $G = (V, E)$ with a minimum number of colors so that no two adjacent vertices have the same color, has been one of the most fundamental problems in graph theory and algorithms. The four-color theorem had been open for more than a century until Appel and Haken \cite{AH77} finally proved it in 1976, and it is now considered one of the most famous theorems in all of mathematics.

In this paper, we revisit the four-color theorem. The proof, as well as its later constructive proofs \cite{RSS+97, IKM+26}, yields only a single coloring, which may be inflexible. Indeed, the only planar graphs that yield a unique 4-coloring are Apollonian networks, characterized by Fowler \cite{Fow98}.

Therefore, we consider a somewhat different direction; here comes into the spotlight: ``How can we obtain a `balanced' 4-coloring?'' This question is related to equitable coloring, a well-studied topic in extremal graph theory since the 1960s \cite{CH63, Erd64, Gru68, HS70, Mey73}, which we briefly discuss below. Moreover, coloring graphs in a balanced way has many applications, including balanced allocation, work sharing, and load balancing, for example, in scheduling \cite{Lei79, Mar04}, clustering \cite{HD78}, register allocation \cite{Cha82}, and bandwidth allocation \cite{Pas23}.

\subsection{Equitable Coloring}
For a graph $G = (V, E)$, its $k$-coloring is \emph{equitable} if the number of vertices of each color differs by at most one. One of the most fundamental results in extremal graph theory concerns equitable coloring: every graph with maximum degree $\Delta$ has an equitable $k$-coloring for all $k \geq \Delta+1$, which was conjectured by Erd\H{o}s in 1964 \cite{Erd64} and proved by Hajnal and Szemer\'{e}di in 1970 \cite{HS70, KK08} (indeed, this result was originally proved in terms of clique packing, i.e., a spanning collection of vertex-disjoint cliques). There have also been a number of studies on equitable coloring beyond graphs with maximum degree $\Delta$; one example is the work in \cite{KNP05} on $d$-degenerate graphs.

\subsection{The Four-Color Theorem and ``Balanced'' Coloring}

Returning to the original motivation, we ask whether a balanced $4$-coloring of a planar graph exists. As an example, consider the bipartite graph $K_{2,n-2}$ with one extra edge, say $\{v_1, v_2\}$, in the partite set of two vertices. This graph has no equitable $4$-coloring (when $n \geq 7$), because if we use color $1$ on $v_1$ and color $2$ on $v_2$, we can only use colors $3$ and $4$ on the remaining $n-2$ vertices. Thus, on one hand, the answer to the question is ``no''. However, on the other hand, if we interpret it as the question of whether each color class can be meaningfully small, the answer might be ``yes'' --- this graph still has a $4$-coloring in which colors $3, 4$ are each used on fewer than $n/2$ vertices (\autoref{fig:lowerbound}); every planar graph might admit a meaningfully balanced $4$-coloring.

\begin{figure}[htbp]
    \centering
    \includegraphics[width=\linewidth]{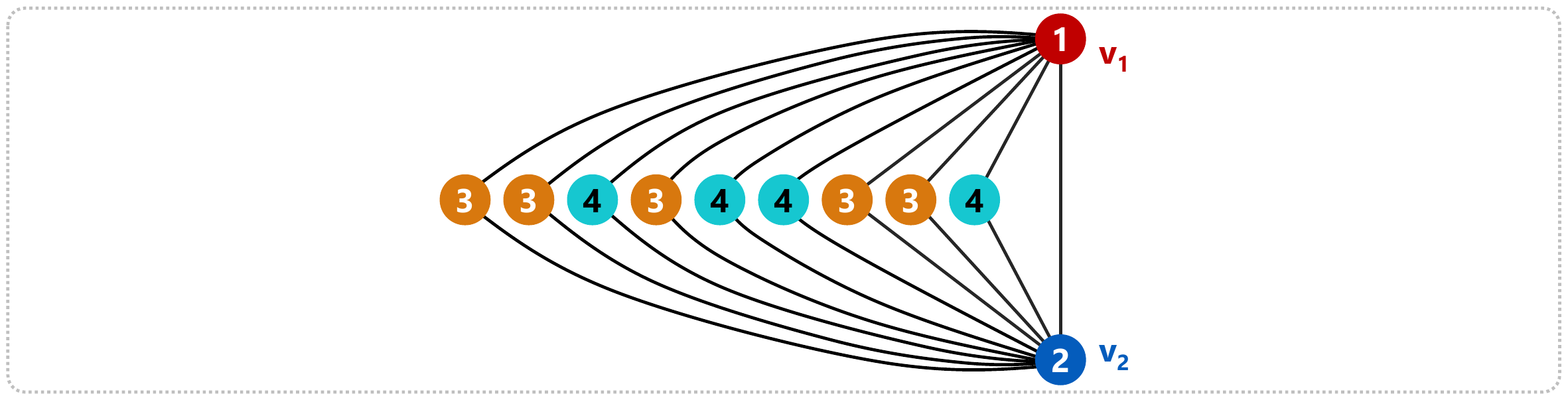}
    \caption{A $4$-coloring of the graph $K_{1,1,n-2}$ with $n = 11$. The number written in each vertex is the assigned color. At least $5$ vertices receive the same color.}
    \label{fig:lowerbound}
\end{figure}

Despite the motivation, to the best of our knowledge, there was no prior research on how balanced $4$-colorings of planar graphs can be achieved. One reason may be that the complexity of the proof of the four-color theorem impedes attempts to strengthen the theorem; \cite{IKM+26} already shows that a generalization of the four-color theorem is highly non-trivial. Another reason may be that the graph theory community has mainly focused on equitable coloring: for example, \emph{bounded coloring} \cite{HHK93}, which aims to minimize the number of colors when the size of each color class is bounded, has not been studied as extensively as equitable coloring.

Let us point out that, given the relationship between equitable coloring and fair division \cite{HH22}, a hot topic in algorithmic game theory, achieving approximately balanced colorings is a natural line of research in this sense, just as fair division focuses on approximate fairness \cite{AAB+23}.

\subsection{Color-Critical Graphs}

It is essential to mention the concept of \emph{color-critical graphs} to discuss the background of this question. A graph $G$ is called $(k+1)$-critical if $G$ is not $k$-colorable but any proper subgraph is. Using this concept, we can create a $k$-colorable graph (which is less obvious than $K_{1,1,n-2}$) such that every $k$-coloring has a color class of size at least around $\frac{n}{k-2}$; indeed, we use a $(k+1)$-critical graph $H$ as a ``gadget''. From $H$, we delete an arbitrary edge $e = \{a, b\}$, and instead we add a new vertex $c$ and draw an edge $\{a, c\}$. Then, we merge this graph with $K_{2,t}$, gluing along vertices $\{b, c\}$ and the partite set of two vertices; see \autoref{fig:grotzsch} for an example.

Since every $k$-coloring of $H-e$ must assign the same color to $a$ and $b$ (due to criticality), it must assign different colors to $b$ and $c$. Therefore, only the remaining $k-2$ colors can be used on the partite set of $t$ vertices, and hence every $k$-coloring has a color class of size at least $\frac{t}{k-2}$.

On the other hand, color-critical graphs lead to deep (algorithmic) results in topological graph theory, especially for graphs on surfaces. Let us contrast the plane and a general surface:

\begin{itemize}
    \item Planar graphs: by the four-color theorem, every graph is $4$-colorable (so no 5-color-critical planar graphs exist).
    \item Graphs on a surface: although the chromatic number can be very large, by the deep result of Thomassen \cite{Tho97}, there are only finitely many $6$-color-critical graphs that can be embedded in any orientable surface. Therefore, one can check if a graph embedded in a fixed surface is $5$-colorable in polynomial time.
\end{itemize}

Therefore, exploring $5$-coloring for graphs on a surface is a significant question, as is $4$-coloring for planar graphs. We shall present balanced colorings for each of these cases.

\begin{figure}
    \centering
    \includegraphics[width=\linewidth]{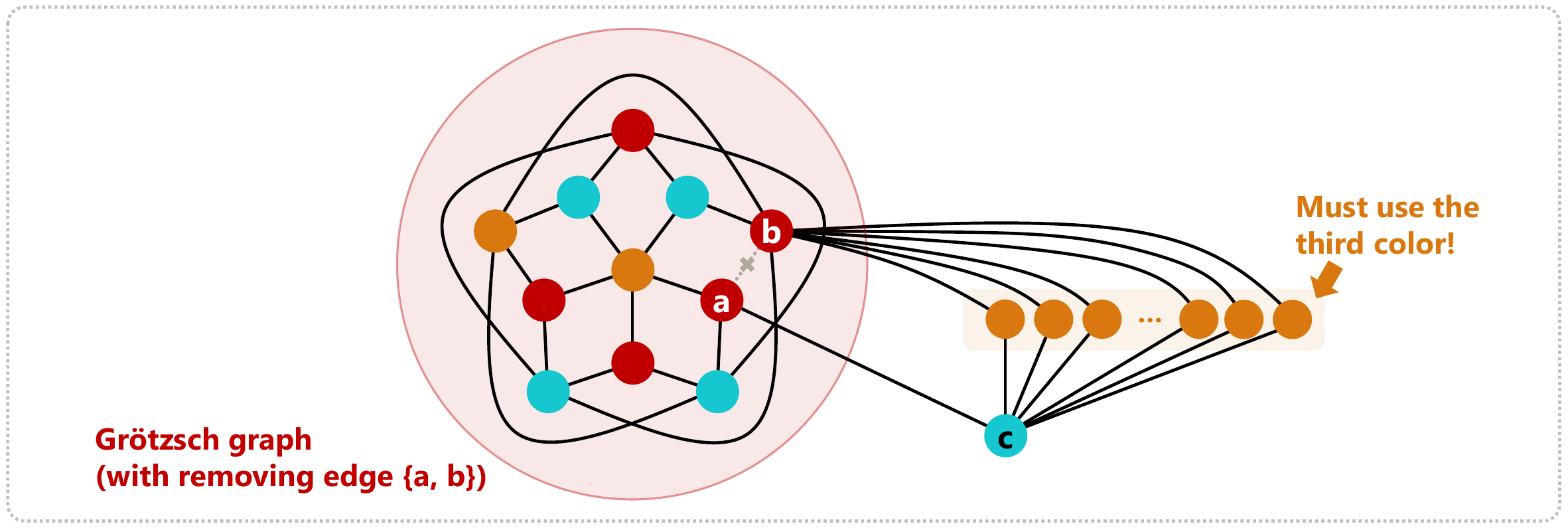}
    \caption{An example in which every 3-coloring is unbalanced. Constructed by gluing $K_{2,t}$ and a $4$-critical graph (the Gr\"{o}tzsch graph) minus one edge. This can be embedded in a projective plane.}
    \label{fig:grotzsch}
\end{figure}

\subsection{Our Contributions}

In this paper, we address the question of how balanced the $4$-coloring of planar graphs can be. Our main result is described as follows:

\begin{description}
    \item[\textbf{Main Result.}] Every planar graph with $n \geq 3$ vertices admits a $4$-coloring such that each color is used on fewer than $n/2$ vertices, and such a coloring can be computed in $O(n \log n)$ time.
\end{description}

We will not dig into the details of the proof of the four-color theorem; rather, we use this theorem as a ``black box'' tool. Our strategy is as follows: we first compute an arbitrary $4$-coloring with an $O(n \log n)$-time algorithm by \cite{IKM+26}, and then we iteratively improve it to make it more balanced by utilizing \emph{Kempe chains} \cite{Kem79}, which is a tool well-known for the four-color theorem. We refer to it as the \emph{Kempe change}, the operation of interchanging two colors in a Kempe chain; this idea has also been studied in reconfiguration problems for graph coloring beyond planar graphs \cite{Moh06, BHI+20}. In short, we show that $O(\log n)$ iterations of Kempe changes is enough to make the coloring sufficiently balanced; since each iteration takes $O(n)$ time, we achieve $O(n \log n)$ time in total.

Strictly speaking, we show the following technical theorem (for $k$-coloring) involving a new parameter $\alpha$, which allows us to extend the result to other surfaces; we use $\alpha = -4$ for planar graphs. With $k = 4$, by combining this theorem with \cite{IKM+26}, we obtain our main result above.

\begin{theorem} \label{thm:main}
    Let $\alpha \geq -4$ be a constant parameter and $k \geq 3$. Let $G$ be a $k$-colorable graph of $n \geq 3$ vertices, such that for all its bipartite subgraphs $H$ with 3 or more vertices, $|E(H)| \leq 2 |V(H)| + \alpha$ holds. Given $G = (V, E)$ together with a $k$-coloring $\varphi: V \to \{1, \dots, k\}$, one can compute a $k$-coloring of $G$ such that each color is used on at most $\left\lceil \frac{n+\alpha+2}{k-2} \right\rceil$ vertices, in $O(kn \log n)$ time.
\end{theorem}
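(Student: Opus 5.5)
The plan is to start from the given $k$-coloring $\varphi$ and repeatedly apply Kempe changes, each time shrinking an oversized color class without pushing any other class over the threshold $B := \left\lceil \frac{n+\alpha+2}{k-2} \right\rceil$. The hypothesis on bipartite subgraphs is the only place planarity-type information enters. It will be used to show that an oversized class always has enough ``slack'' to be exploited.

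\textbf{Step 1: an oversized class can always be shrunk.} Fix a color $c$ with $|\varphi^{-1}(c)| > B$, and call a color $d \neq c$ \emph{small} if $|\varphi^{-1}(d)| < B$. Since $(k-2)B \ge n+\alpha+2$ and one class already exceeds $B$, a counting argument shows that small colors exist (at least two of them when $\alpha$ is not too negative). For a small $d$, let $H_{cd}$ be the subgraph induced by $\varphi^{-1}(c)\cup\varphi^{-1}(d)$. If every component $K$ of $H_{cd}$ satisfied $|K\cap \varphi^{-1}(c)| \le |K\cap\varphi^{-1}(d)|$, summing over components would give $|\varphi^{-1}(c)|\le|\varphi^{-1}(d)|<B$, a contradiction. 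Hence some Kempe chain of $H_{cd}$ is $c$-heavy, and swapping it strictly decreases class $c$.

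\textbf{Step 2: controlling overshoot via the density condition.} The danger is a very unbalanced chain, such as $K_{2,t}$, whose swap makes class $d$ too large. I would not swap whole chains blindly. Instead, I would look at the bipartite graph $H$ between $C=\varphi^{-1}(c)$ and $V\setminus C$, keeping only the edges incident to $C$; this $H$ is bipartite, so $|E(H)| \le 2n+\alpha$.

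Vertices of $C$ with no neighbor of some small color $d$ are recolored to $d$ directly, one at a time, while $d$ stays below $B$. Vertices of $C$ with exactly one neighbor in a small class $d$ are handled by a Kempe change of the $c$-$d$ chain through that neighbor, which I would arrange to be a small chain. If neither move is available, every vertex of $C$ has at least two neighbors in each of the (at least $k-2$) colors that are not too large. Then
\[ 2(k-2)|C| \le |E(H)| \le 2n+\alpha, \]
after adjusting for the boundary terms. Tuning which colors count as ``not too large'' should make this contradict $|C|>B$ exactly at the threshold $\frac{n+\alpha+2}{k-2}$. The term $+2$ comes from the two vertices forming the partite set in the extremal $K_{1,1,n-2}$ example, which also certifies tightness.

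\textbf{Step 3: running time.} Each round recomputes all $c$-$d$ Kempe components in $O(n)$ time per pair, i.e.\ $O(kn)$ per round. I would then show, via the potential $\sum_c \max(0,|\varphi^{-1}(c)|-B)$ or a geometric analogue, that one round of simultaneous swaps over all $c$-heavy chains of a suitably chosen pair reduces the excess by a constant factor. That gives $O(\log n)$ rounds and $O(kn\log n)$ time in total.

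\textbf{Main obstacle.} I expect the main obstacle to be Step 2 together with the geometric-decrease claim. Whole-chain swaps are atomic and can overshoot, so the counting must show that whenever no ``safe'' move exists, the density bound $|E(H)|\le 2|V(H)|+\alpha$ is violated. My first attempt would be to apply the density bound to the union of all $c$-heavy chains over the small colors simultaneously.
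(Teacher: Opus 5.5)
Your proposal correctly identifies the obstacle: Kempe chains are atomic and can overshoot. However, it never supplies a mechanism that controls overshoot, and two of its concrete claims fail.

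\textbf{Step 2.} The chain through the unique $d$-neighbor cannot be ``arranged'' to be small. It is dictated by $\varphi$ and may be all of $H_{cd}$. In $K_{1,1,n-2}$, every vertex of a big class has exactly one neighbor of each hub color, and the chain through a hub is the whole star, so this move is never safe there. Hence the dichotomy ``a safe move exists, or every vertex of $C$ has at least two neighbors in each small color'' is unfounded, and $2(k-2)|C|\le 2n+\alpha$ does not follow.

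\textbf{Step 3.} Swapping \emph{all} $c$-heavy chains of a pair is provably useless. Write $a_K=|K\cap\varphi^{-1}(c)|$ and $b_K=|K\cap\varphi^{-1}(d)|$ for the components $K$ of $H_{cd}$. Then $\sum_K(a_K-b_K)=|C|-|\varphi^{-1}(d)|$, and dropping the non-positive terms only increases this sum, so afterwards class $d$ has at least $|C|$ vertices. Moreover, the potential $\sum_c\max(0,|\varphi^{-1}(c)|-B)$ need not shrink by a constant factor per round even under a good move. Such a move may merely split the excess of $c$ with a class $d$ lying just below $B$.

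\textbf{What is missing.} You need (i) a bound on the imbalance of each \emph{individual} chain, and (ii) the swap of a chosen \emph{mixed} set of chains.

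For (i), the paper counts degrees on the non-$c$ side. Let $e_K$ be the number of edges of $K$ and $p_K=\sum_{v\in K\cap\varphi^{-1}(d)}\max(2-d_{H_{cd}}(v),0)$. Connectivity and degree counting give $e_K+p_K\ge\max(a_K+b_K-1,\,2b_K)$. Applying the hypothesis to $H$ with its low-degree non-$c$ vertices removed gives $|E(H)|\le 2n+\alpha-\sum_{v\notin C}\max(2-d_H(v),0)$. Sum over \emph{all} chains of \emph{all} colors $d\neq c$, so that $\sum_{d,K}(a_K+b_K)=(k-2)|C|+n$. This yields $\sum_{d,K}\max(b_K-a_K,-1)\le -(k-2)|C|+n+\alpha$. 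Hence for some $d$ the inner sum is at most $-(\delta/2+1)$, where $\delta=|C|-\bigl(\frac{n+\alpha+1}{k-2}+1\bigr)$. The slacks $\max(b_K-a_K,-1)-(b_K-a_K)$ are non-negative and sum to at most $(|C|-|\varphi^{-1}(d)|)-\delta/2-1$. Therefore every chain satisfies $a_K-b_K\le(|C|-|\varphi^{-1}(d)|)-\delta/2$. Unbalanced chains such as $K_{2,t}$ are paid for by their surplus edges.

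For (ii), take prefix sums of $a_K-b_K$ over the chains of $H_{cd}$ in any order, including $d$-heavy ones. This gives a set $S$ with $\delta/4\le\sum_{K\in S}(a_K-b_K)\le(|C|-|\varphi^{-1}(d)|)-\delta/4$, so both classes end with at most $|C|-\delta/4$ vertices.

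A round changes only two classes, so the maximal excess $\delta$ drops by a factor $3/4$ within $k$ rounds, i.e.\ $O(k\log n)$ rounds in total. To stay within $O(kn\log n)$ you then need $O(n)$ time per round rather than your $O(kn)$. The paper achieves this by running DFS only on non-isolated vertices of each $H_{cd}$ and counting the isolated ones arithmetically.
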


We note that the threshold is $\left\lceil \frac{n-2}{k-2} \right\rceil$ for planar graphs, due to $\alpha = -4$. Meanwhile, for a planar graph $K_{1,1,n-2}$ (see \autoref{fig:lowerbound}), one color is used on $\left\lceil \frac{n-2}{k-2} \right\rceil$ or more vertices in every $k$-coloring. Therefore, the bound above is the best possible, for all $n \geq 3$ and $k \geq 3$.

\subsection{Organization of the Paper}

In \autoref{sec:prelim}, we present the preliminaries needed for the proof. In \autoref{sec:overview}, we present an overview of the proof and the algorithm. In \autoref{sec:proof}, we present the proof of the main technical theorem and the algorithm. In \autoref{sec:consequences}, we state the implications of the main technical theorem for planar graphs and other surfaces (e.g., projective plane and torus). Finally, in \autoref{sec:conclusion}, we conclude this paper by mentioning several conjectures.

\section{Preliminaries} \label{sec:prelim}

\paragraph{\textbf{Convention and Notation.}}
In this paper, we consider an undirected simple graph $G = (V, E)$, which has no self-loops and no multiple edges. The sets of vertices and edges of $G$ are also denoted by $V(G)$ and $E(G)$, respectively. The degree of a vertex $v \in V$ in $G$ is denoted by $d(v)$ or $d_G(v)$. The \emph{induced subgraph} of $G$ on a vertex set $S \subseteq V$ is denoted by $G[S]$. We identify each \emph{component} of $G$ with its vertex set $C \subseteq V$.

\paragraph{\textbf{Kempe Change.}}
Let $\varphi: V \to \{1, \dots, k\}$ be a coloring of $G = (V, E)$. Let $V_i := \{v \in V : \varphi(v) = i\}$ be the set of vertices of color $i$. A \emph{Kempe change} is an operation that chooses two colors $a, b$ and a component $C \subseteq V_a \cup V_b$ of the induced subgraph $G[V_a \cup V_b]$, and then recolors each vertex in $C$ from $a$ to $b$, and from $b$ to $a$. \autoref{fig:flip-operation} shows an example. Trivially, the resulting coloring after Kempe change is proper, i.e., no two adjacent vertices have the same color.

\begin{figure}[htbp]
    \centering
    \includegraphics[width=\linewidth]{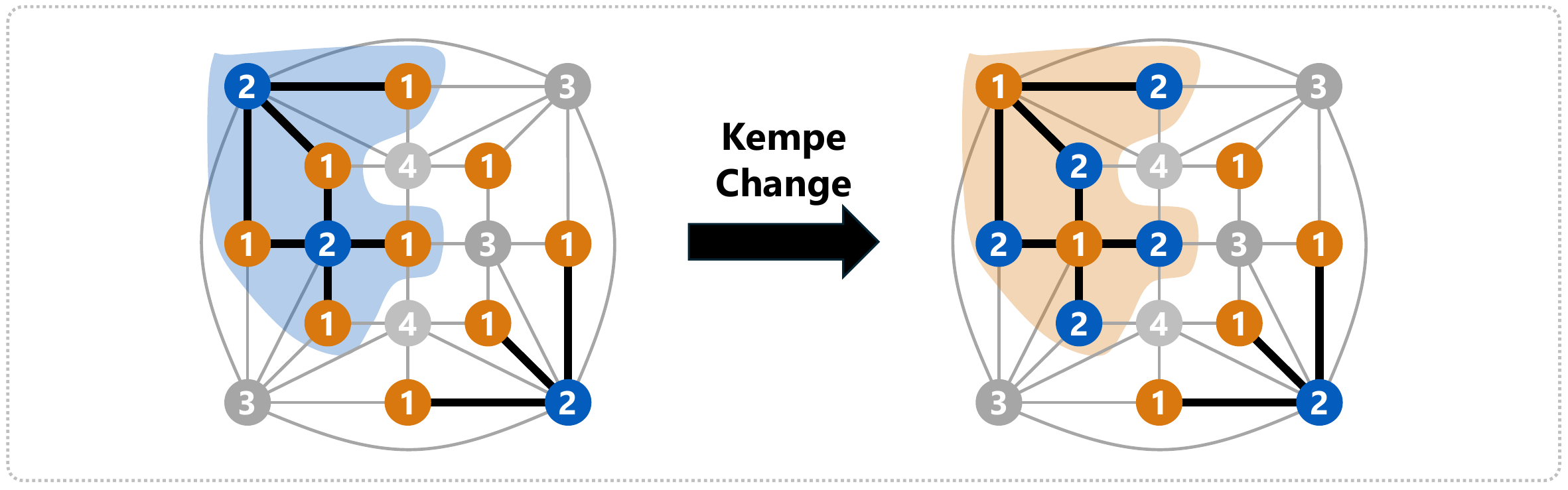}
    \caption{An example of a Kempe change, which is applied to a component of colors $1$ and $2$.}
    \label{fig:flip-operation}
\end{figure}

\section{Strategies and Intuitions} \label{sec:overview}

In this section, we sketch the idea of our proof for the main result: every planar graph $G = (V, E)$ with $n \geq 3$ vertices admits a $4$-coloring such that each color is used on fewer than $n/2$ vertices.

\subsection{Overall Strategy}

The overall strategy is to first obtain an arbitrary $4$-coloring $\varphi: V \to \{1, 2, 3, 4\}$, and then repeatedly improve the coloring $\varphi$ by Kempe changes. For simplicity, we always assume, w.l.o.g., that color $1$ is the most frequently used. Also, for $i = 1, 2, 3, 4$, let $V_i := \{v \in V : \varphi(v) = i\}$ be the set of vertices of color $i$. In order to make sure that $\varphi$ finally becomes a coloring in which each color is used on fewer than $n/2$ vertices, all we need to prove is the following: ``if $|V_1| \geq n/2$, there exists a Kempe change involving color $1$ that decreases the size of the largest color class.''

\subsection{Idea 1: ``All Connected'' is Impossible} \label{subsec:overview-2}

When it is impossible to improve the coloring by a Kempe change, what would the coloring look like? For $i = 2, 3, 4$, let $H_i := G[V_1 \cup V_i]$ be the induced subgraph on vertices of colors $1$ and $i$. A leading ``impossible'' example is when $H_2, H_3, H_4$ are all connected --- in this case, a Kempe change on $H_i$ simply interchanges colors $1$ and $i$ in the entire coloring $\varphi$, resulting in no improvement.

The good news is that, as long as $|V_1| \geq n/2$, this case cannot happen. We prove this in what follows, but before that, we define a graph $H := H_2 \cup H_3 \cup H_4$. Note that $H$ consists of all vertices in $G$, and the edges in $G$ where one endpoint has color $1$. Then, $H$ is a bipartite graph, which can be divided into two parts $V_1$ and $V_2 \cup V_3 \cup V_4$.

\begin{proposition}[\cite{MT01}] \label{prop:bipartite-max}
    Every bipartite planar graph with $n \geq 3$ vertices has at most $2n-4$ edges.
\end{proposition}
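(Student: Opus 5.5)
The plan is to combine Euler's formula with the fact that a simple bipartite plane graph has no faces of length three. We may assume without loss of generality that the graph $G=(V,E)$ is connected. If it is not, we can add edges between components while keeping it planar and bipartite: join a vertex of one component to a vertex of another, choosing the endpoints in opposite parts. Adding edges only makes the desired inequality harder, so proving it for the connected supergraph suffices. If $G$ is a tree, then $|E| = n-1 \le 2n-4$ as soon as $n \ge 3$, so we may also assume $G$ contains a cycle.

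Fix a plane embedding of $G$ with $f$ faces. Euler's formula gives $n - |E| + f = 2$. The key step is the bound on face boundaries. Since $G$ is bipartite, it has no odd cycles, and in particular no triangles. Since $G$ is simple and connected with at least three vertices and is not a tree, every face has a boundary walk of length at least $4$.

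To justify this, note that a boundary walk of length $1$ or $2$ would require a loop or a multiple edge, and both are excluded. A walk of length $3$ would be a triangle. A face whose boundary walk traverses some bridge edges twice still has a walk of length at least $4$ once $n \ge 3$. This degenerate case, involving bridges or non-cycle boundaries, is the only point needing care, and I expect it to be the main (mild) obstacle.

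Summing the boundary lengths over all faces counts each edge exactly twice, so
\[
2|E| = \sum_{\text{faces } F} \operatorname{len}(F) \ge 4f .
\]
Hence $f \le |E|/2$. Substituting into Euler's formula gives
\[
2 = n - |E| + f \le n - |E|/2 ,
\]
which rearranges to $|E| \le 2n - 4$, as claimed.
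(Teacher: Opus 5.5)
Your proof is correct and is the standard Euler's-formula argument (no triangular faces, hence $2|E| \ge 4f$); the paper gives no proof of its own and simply cites Mohar--Thomassen, where this is the argument used. The degenerate case you flag has a short justification: every face boundary walk is a closed walk, so it has even length in a bipartite graph. A walk of length $2$ would force either a multiple edge or an edge whose two endpoints both have degree one, i.e.\ $G = K_2$, and connectivity with $n \ge 3$ excludes the latter.
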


\begin{proposition} \label{prop:not-all-connected}
    Assuming $|V_1| \geq n/2$, at least one of $H_2, H_3, H_4$ is not connected.
\end{proposition}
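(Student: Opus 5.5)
We argue by contradiction, assuming that $H_2$, $H_3$, $H_4$ are all connected, and count edges of the bipartite planar graph $H = H_2 \cup H_3 \cup H_4$. Write $n_i := |V_i|$. Each $H_i$ is a connected graph on $n_1 + n_i$ vertices, so it has at least $n_1 + n_i - 1$ edges. Every edge of $H_i$ joins a vertex of $V_1$ to a vertex of $V_i$, so for $i \neq j$ the edge sets of $H_i$ and $H_j$ are disjoint. Therefore
\[
|E(H)| \;\geq\; \sum_{i=2}^{4} (n_1 + n_i - 1) \;=\; 3n_1 + (n - n_1) - 3 \;=\; n + 2n_1 - 3 .
\]

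For the upper bound, $H$ is a subgraph of the planar graph $G$ and is bipartite with parts $V_1$ and $V_2 \cup V_3 \cup V_4$. It spans all $n \geq 3$ vertices, so \autoref{prop:bipartite-max} gives $|E(H)| \leq 2n - 4$. Combining the two bounds yields $n + 2n_1 - 3 \leq 2n - 4$, that is, $2n_1 \leq n - 1$. This contradicts $n_1 \geq n/2$, so at least one $H_i$ is disconnected.

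Some edge cases need care, and these are the only real obstacle. If some color $i \in \{2,3,4\}$ is unused, then $H_i = G[V_1]$. Since $V_1$ is an independent set, this graph is disconnected unless $n_1 = 1$. But $n_1 \geq n/2 \geq 3/2$ forces $n_1 \geq 2$, so that case already gives the conclusion. The bound ``connected on $m$ vertices implies at least $m-1$ edges'' is valid for every $m \geq 1$, so the counting goes through in all cases. We also need $H$ to have at least $3$ vertices, which holds because $H$ contains all $n \geq 3$ vertices.

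The same computation, using the hypothesis $|E(H)| \leq 2|V(H)| + \alpha$ in place of \autoref{prop:bipartite-max} and general $k$, should give the analogue needed for \autoref{thm:main}. The inequality becomes $n + (k-2)n_1 - (k-1) \leq 2n + \alpha$.
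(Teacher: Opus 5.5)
Your proof is correct and is essentially the paper's argument: assuming $H_2,H_3,H_4$ are all connected, you lower-bound $|E(H)|$ by $\sum_{i=2}^{4}(|V_1|+|V_i|-1) = 2|V_1|+n-3$, compare this with the bound $2n-4$ from \autoref{prop:bipartite-max}, and obtain $|V_1| \leq \frac{n-1}{2}$, which contradicts $|V_1| \geq n/2$. Your extra remarks (edge-disjointness of the $H_i$, the case of an unused color) are valid but not needed, as you yourself observe, since a connected graph on $m \geq 1$ vertices always has at least $m-1$ edges.
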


\begin{proof}
    Suppose $H_2, H_3, H_4$ are all connected. Then we have $|E(H_i)| \geq |V(H_i)| - 1 = |V_1| + |V_i| - 1$ for each $i$. Therefore, the total number of edges is at least
    \begin{equation*}
        |E(H)| \geq \sum_{i=2}^4 (|V_1|+|V_i|-1) = 2 |V_1| + \sum_{i=1}^4 |V_i| - 3 = 2 |V_1| + n - 3
    \end{equation*}
    However, $|E(H)| \leq 2n-4$ due to \autoref{prop:bipartite-max}. Hence, we obtain $2 |V_1| + n - 3 \leq 2n-4$, which yields $|V_1| \leq \frac{n-1}{2}$. Since $|V_1|$ and $n$ are integers, the actual bound is $|V_1| < n/2$, which contradicts the assumption (see \autoref{fig:all-connected}). Therefore, at least one of $H_2, H_3, H_4$ is not connected.
\end{proof}

\begin{figure}[b]
    \centering
    \includegraphics[width=\linewidth]{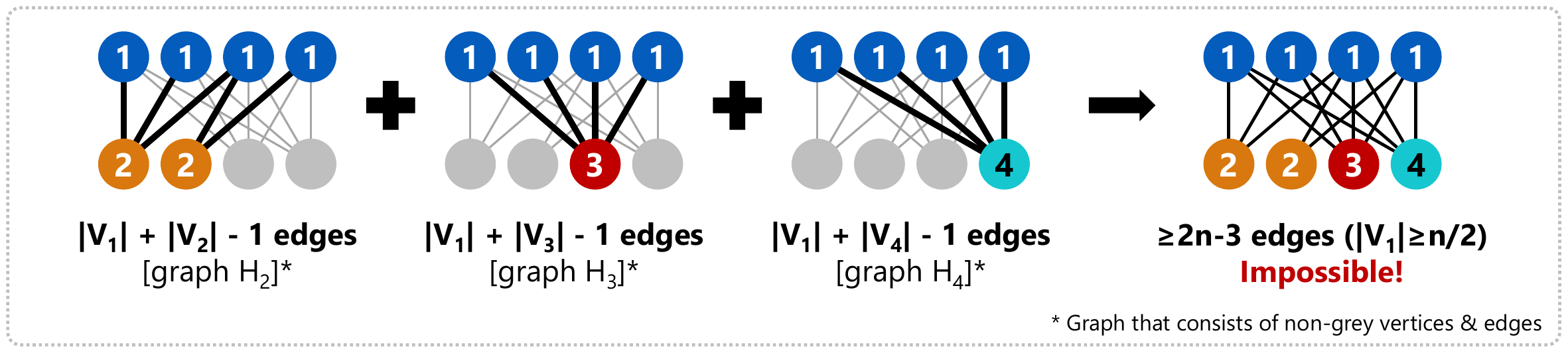}
    \caption{The sketch of the proof, along with the definition of the graph $H_i$. Note that this figure does not include the edges between colors $2, 3,$ and $4$.}
    \label{fig:all-connected}
\end{figure}

\subsection{Idea 2: The Penalty for Deviations}

While \autoref{prop:not-all-connected} gives the intuition that the $n/2$ bound may be achievable, there might be some cases where, even if some $H_i$ is not connected, no Kempe changes can improve the coloring. However, in this case, the number of vertices of color $1$ versus color $i$ must be highly imbalanced across the components of $H_i$. In particular, one can show that there must exist a component $C \subseteq V_1 \cup V_i$ of $H_i$ such that $|C \cap V_i| \geq |C \cap V_1|$ (cf. \autoref{lem:penalty2}).

Let us consider what the existence of such a component $C$ triggers. As seen in the proof of \autoref{prop:not-all-connected}, the bound on improving the coloring by Kempe change is based on the fact that $H$ can have at most $2n-4$ edges. Apparently, the more edges in $H$, the less freedom to improve the coloring by Kempe changes. However, the following happens:

\begin{itemize}
    \item If $H_i[C]$ is not a tree: The set of possible Kempe changes remains unchanged even if we remove all the edges not in a certain spanning tree of $H_i[C]$. Therefore, the situation is substantially the same as when $H$ has at most $2n-5$ edges.
    \item If $H_i[C]$ is a tree: Since there are $|C|-1$ edges in the tree and the condition $|C \cap V_i| \geq |C \cap V_1|$ holds, there must exist a vertex $v \in C$ of color $i$ which has degree at most $1$. Then, $v$ has degree at most $1$ also in $H$. However, for a bipartite planar graph $H$ that has a vertex of degree at most $1$, it can have at most $2n-5$ edges when $n \geq 4$.
\end{itemize}

Therefore, both cases turn out to be disadvantageous for creating a situation where no Kempe change can improve the coloring, and intuitively, for this reason, creating such a situation is impossible when $|V_1| \geq n/2$ (we prove this in \autoref{sec:proof} in a way that analyzes ``excessive edges'' and ``low-degree vertices''). As a result, by repeatedly applying Kempe changes while $|V_1| \geq n/2$, we can obtain a $4$-coloring such that each color is used on fewer than $n/2$ vertices.

\section{The Proof and Algorithm}\label{sec:proof}

Now, we start proving the main technical theorem for a general number of colors $k$ and a generalized graph class parameterized by $\alpha \geq -4$. Note that we use $\alpha = -4$ for planar graphs, by \autoref{prop:bipartite-max}.

\begingroup
\def\thetheorem{\ref{thm:main}}
\begin{theorem}[restated]
    Let $\alpha \geq -4$ be a constant parameter and $k \geq 3$. Let $G$ be a $k$-colorable graph of $n \geq 3$ vertices, such that for all its bipartite subgraphs $H$ with 3 or more vertices, $|E(H)| \leq 2 |V(H)| + \alpha$ holds. Given $G = (V, E)$ together with a $k$-coloring $\varphi: V \to \{1, \dots, k\}$, one can compute a $k$-coloring of $G$ such that each color is used on at most $\left\lceil \frac{n+\alpha+2}{k-2} \right\rceil$ vertices, in $O(kn \log n)$ time.
\end{theorem}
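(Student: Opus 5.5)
We argue by contradiction on a \emph{stuck} coloring, and then turn the argument into an iterative algorithm. Write $T := \left\lceil \frac{n+\alpha+2}{k-2} \right\rceil$. Relabel so that color $1$ is the most frequent, and put $m := |V_1|$. For $i = 2, \dots, k$ let $H_i := G[V_1 \cup V_i]$, and let $H := \bigcup_i H_i$. As in \autoref{subsec:overview-2}, $H$ is a bipartite subgraph of $G$ on all $n$ vertices, so the hypothesis applies to it and to each of its subgraphs on at least $3$ vertices.

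\textbf{Step 1: what a stuck coloring looks like.} Call a component $C$ of $H_i$ \emph{good} if $d_C := |C \cap V_1| - |C \cap V_i| > 0$. Flipping a good $C$ lowers $|V_1|$ by $d_C$ and raises $|V_i|$ by $d_C$. Since flipping all components of $H_i$ merely swaps the two classes, the sum of $d_C$ over the components of $H_i$ equals $m - |V_i|$. We flip a family of good components of $H_i$ greedily, as long as $|V_i|$ stays at most $\max(T, |V_1|)$. If the greedy step cannot start, two cases arise. Either no good component exists, or every good component has $d_C$ so large that flipping it overshoots. In the second case the flip of the complementary family is an improving move, so we reduce to the first case. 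Hence we would prove:
\begin{quote}
if $m > T$, then some $H_i$ has a good component whose flip does not push $|V_i|$ above $\max(T, m-1)$.
\end{quote}
Suppose this fails. Then, up to the overshoot case handled by complementation, for every $i$ each component $C$ of $H_i$ satisfies $|C \cap V_i| \geq |C \cap V_1|$.

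\textbf{Step 2: counting (the main obstacle).} For each $i$, choose a spanning forest $F_i$ of $H_i$ with $c_i$ components, and set $F := \bigcup_i F_i$. Then
\[
|E(F)| = \sum_{i \geq 2} \bigl(m + |V_i| - c_i\bigr) = (k-2)m + n - \sum_i c_i .
\]
On the other hand, $|E(F)| \leq 2n + \alpha$. A naive bound only gives $(k-2)m \leq n + \alpha + \sum_i c_i$, which is too weak once many components exist. We repair this by charging each component of each $H_i$ a penalty, as in Idea 2. Take a component $C$ of $H_i$ with $|C \cap V_i| \geq |C \cap V_1|$. A tree on $|C|$ vertices has only $|C| - 1$ edges, so its color-$i$ vertices have average degree at most about $1$ in $F_i$, and hence some $v \in C \cap V_i$ has $d_{F_i}(v) \leq 1$. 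Note that $v$ lies in no other $H_j$, so $d_F(v) \leq 1$ as well.

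Let $L$ be a set containing one such vertex per component of each $H_i$; then $|L| = \sum_i c_i$. Deleting $L$ from $F$ removes at most $|L|$ edges, leaving a bipartite graph on $n - |L|$ vertices. Provided $n - |L| \geq 3$, the hypothesis gives
\[
|E(F)| - |L| \leq 2(n - |L|) + \alpha, \qquad\text{so}\qquad |E(F)| \leq 2n + \alpha - \sum_i c_i .
\]
Combining this with the formula for $|E(F)|$ yields $(k-2)m \leq n + \alpha$, which contradicts $m > T \geq \frac{n+\alpha+2}{k-2}$. The delicate points are twofold. First, the degenerate cases where $n - |L| < 3$ or a component is a single vertex must be handled; these are where the $+2$ slack and the ceiling in $T$ get used. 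Second, the vertices in $L$ must be distinct, which holds because each lies in $V_i$ for its own $i$. I expect fixing these small cases and the exact rounding to be the fiddly part of the proof.

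\textbf{Step 3: algorithm and running time.} One round costs $O(kn)$. In a round we compute the components of all $H_i$ in $O(kn)$ time by BFS, since the edge count is linear. We then pick the index $i$ with the largest total surplus of good components and greedily flip good components of $H_i$ to move $|V_1|$ and $|V_i|$ toward balance. By the counting in Step 2, the total surplus $\sum_{C \text{ good}} d_C$ over all $i$ is at least a constant fraction (depending on $k$) of the excess $m - T$, so each round shrinks the excess $\sum_j \max(0, |V_j| - T)$ by a constant factor. This gives $O(\log n)$ rounds and $O(kn \log n)$ time in total. If the constant-factor decrease turns out to be hard to certify, the fallback is the potential $\sum_j |V_j|^2$ with batched flips, accepting a weaker but still polylogarithmic round count.
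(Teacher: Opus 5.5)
\textbf{There is a genuine gap, and it is in the complementation step of Step 1, which is false.} Suppose flipping a good component $C$ of $H_i$ overshoots. Flipping every \emph{other} component of $H_i$ leaves color $1$ with $|V_i|+d_C$ vertices and color $i$ with $m-d_C$ vertices. That is the same pair of class sizes as flipping $C$ alone, with the labels exchanged, so it is never an improving move.

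As a result, Step 2 only covers the case where every component of every $H_i$ has $|C\cap V_i|\ge|C\cap V_1|$. But the $d_C$ over the components of $H_i$ sum to $m-|V_i|$, so that case forces $|V_i|=m$ for all $i$, i.e.\ $n=km$. This already contradicts $m>T\ge\frac{n-2}{k-2}$ by arithmetic alone, since it gives $m<1$. So your counting only treats a vacuous case. The real case, where good components exist but each one is too large to flip, is left open.

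Your one-leaf-per-component charging cannot handle that case either. A good component need not contain a color-$i$ vertex of degree at most $1$; the path colored $1,i,1$ already has none. So $L$ misses the $g$ good components, and you only get $(k-2)m\le n+\alpha+g$. That is no contradiction once $g\ge k$.

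\textbf{The missing idea is the paper's weighted penalty, combined with capping each component's contribution at $-1$.}
\begin{itemize}
\item Charge \emph{every} vertex $v\in V_2\cup\dots\cup V_k$ a deficit $\max(2-d_H(v),0)$. Deleting all such low-degree vertices and applying the hypothesis to the rest gives $|E(H)|\le 2n+\alpha-\sum_v\max(2-d_H(v),0)$.
\item Take a component with $a$ color-$1$ vertices, $b$ color-$i$ vertices, $e$ edges and total deficit $p$. Then $e+p\ge\max(a+b-1,\,2b)$, so the component contributes at least $\max(b-a,-1)$. A good component costs only $-1$ however large it is, while a component with $b-a=r\ge 0$ earns $r$.
\item Summing gives $\sum_{i,j}\max(b_{i,j}-a_{i,j},-1)\le -(k-2)m+n+\alpha\le -k$.
\item The cap is what rules out overshoot. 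If $\sum_j\max(b_{i,j}-a_{i,j},-1)\le -x$ for some $i$, then every single component satisfies $a_{i,j}-b_{i,j}\le(|V_1|-|V_i|)-(x-1)$.
\item A prefix-sum argument then yields a set of components whose flip moves between $\delta/4$ and $(|V_1|-|V_i|)-\delta/4$ vertices.
\end{itemize}

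\textbf{Step 3 lacks this quantitative statement as well.} Saying the total surplus is a constant fraction of the excess does not by itself produce a non-overshooting flip. Moreover, $O(kn)$ work per round with a contraction factor like $1-c/k$ gives $O(k^2n\log n)$, not $O(kn\log n)$, and the $\sum_j|V_j|^2$ fallback is weaker still. The paper gets the stated bound in two parts. It shows that $\delta$ drops by a factor $3/4$ within every $k$ rounds. It also makes each round cost $O(n)$, by running DFS only over the non-isolated vertices of each $H_i$, for a total of $O(|E(H)|)=O(n)$.
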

\addtocounter{theorem}{-1}
\endgroup

\subsection{Part 1: Analysis of Components} \label{sec:proof-1}

In this section, before proceeding to the explanation of our algorithm, we formulate the conditions of connected components in the graph stated in \autoref{thm:main}.

\paragraph{The Setup.} For the given $k$-coloring $\varphi : V \to \{1, \dots, k\}$, let $V_i := \{v \in V : \varphi(v) = i\}$ for $i = 1, \dots, k$. If $\varphi$ does not yet meet the output condition, we can assume w.l.o.g. that $|V_1| > \left\lceil \frac{n+\alpha+2}{k-2} \right\rceil$.

As described in \autoref{sec:overview}, we define the graph $H_i := G[V_1 \cup V_i]$ for $i = 2, \dots, k$ and $H := H_2 \cup \dots \cup H_k$. We analyze the structure of each $H_i$. For each $i = 2, \dots, k$, let $C_{i,1}, \dots, C_{i,q_i} \subseteq V(H_i)$ be the components of $H_i$, where $q_i$ is the number of components. For each component $C_{i,j}$, we define the values $a_{i,j}, b_{i,j}, e_{i,j}, p_{i,j}$ as in \autoref{tab:variable}.

\begin{table}[t]
    \small
    \centering
    \caption{The list of variables that we define for each component $C_{i,j}$.}
    \begin{tabular}{l|l|p{9.5cm}}
    \hline
        Value & Definition & Meaning \\ \hline
        $a_{i, j}$ & $= |C_{i,j} \cap V_1|$ & Number of vertices with color $1$ \\
        $b_{i, j}$ & $= |C_{i,j} \cap V_i|$ & Number of vertices with color $i$ \\
        $e_{i, j}$ & $= |E(G[C_{i,j}])|$ & Number of edges \\
        $p_{i, j}$ & $= \sum_{v \in C_{i,j} \cap V_i} \max(2 - d_{H_i}(v), 0)$ & The ``penalty'' for low-degree color $i$ vertices \\ \hline
    \end{tabular}
    \label{tab:variable}
\end{table}

The key factor in deciding whether we can improve the coloring $\varphi$ lies in $a_{i,j} - b_{i,j}$. This is because, if we apply Kempe change in the component $C_{i,j}$, the number of vertices of colors $1$ and $i$ decreases and increases by $a_{i,j} - b_{i,j}$, respectively. Therefore, the goal of this section is to derive some useful conditions on the distribution of $a_{i,j} - b_{i,j}$.

\begin{lemma} \label{lem:bipartite-max-refine}
    $|E(H)| \leq (2n+\alpha) - \sum_{v \in V_2 \cup \dots \cup V_k} \max(2 - d_H(v), 0)$.
\end{lemma}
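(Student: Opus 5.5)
We prove the bound by deleting the low-degree vertices of the non-$1$ side and applying the hypothesis of \autoref{thm:main} to what remains. Let $L := \{v \in V_2 \cup \dots \cup V_k : d_H(v) \leq 1\}$. Since $\max(2 - d_H(v), 0) = 2 - d_H(v)$ exactly for $v \in L$ and is $0$ otherwise, the claimed inequality reads
\begin{equation*}
    |E(H)| \leq 2n + \alpha - \sum_{v \in L} (2 - d_H(v)).
\end{equation*}
Let $H' := H - L$. It has $n' := n - |L|$ vertices, and since $L$ is an independent set in $H$ (it lies inside one side of the bipartition), we get $|E(H)| = |E(H')| + \sum_{v \in L} d_H(v)$.

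If $n' \geq 3$, then $H'$ is a bipartite subgraph of $G$ with at least $3$ vertices. The hypothesis of \autoref{thm:main} gives $|E(H')| \leq 2n' + \alpha = 2n - 2|L| + \alpha$, and therefore
\begin{equation*}
    |E(H)| \leq 2n + \alpha - \sum_{v \in L} (2 - d_H(v)),
\end{equation*}
which is exactly the claim.

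The main obstacle is the degenerate case $n' \leq 2$, where the hypothesis cannot be applied to $H'$. Here we argue directly.
\begin{itemize}
    \item The right-hand side is at least $2n + \alpha - 2|L| = 2n' + \alpha \geq 2n' - 4$, since $\alpha \geq -4$.
    \item The left-hand side satisfies $|E(H)| = |E(H')| + \sum_{v \in L} d_H(v)$. Each $v \in L$ has $d_H(v) \leq 1$, and its neighbours in $H$ all lie in $V_1$, hence in $V(H')$.
    \item If $V(H')$ contains no color-$1$ vertex, then every $v \in L$ has $d_H(v) = 0$ and $|E(H')| = 0$, so $|E(H)| = 0$.
\end{itemize}
We therefore need $|E(H)| \leq 2n + \alpha - 2|L| + \sum_{v \in L} d_H(v)$. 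Rearranged, this is $|E(H')| \leq 2n' + \alpha$, with the $L$-degrees cancelling on both sides.

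For $n' = 0$ this says $0 \leq \alpha$, which can fail when $\alpha < 0$. So instead of deleting all of $L$, we keep enough vertices to reach $3$. Since $n \geq 3$, choose a set $L_0 \subseteq L$ such that $H'' := H - (L \setminus L_0)$ has exactly $\max(n', 3)$ vertices. Applying the hypothesis to $H''$ gives $|E(H'')| \leq 2|V(H'')| + \alpha$, and hence
\begin{equation*}
    |E(H)| \leq 2n + \alpha - \sum_{v \in L \setminus L_0} (2 - d_H(v)).
\end{equation*}
The leftover deficit is $\sum_{v \in L_0} (2 - d_H(v))$, which must be absorbed by the slack in $|E(H'')| \leq 2|V(H'')| + \alpha$.

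$H''$ has only $3$ vertices, so it has at most $2$ edges. We want $2 \cdot 3 + \alpha - (\text{deficit}) \geq |E(H'')|$, that is, deficit $\leq 6 + \alpha - |E(H'')|$.
\begin{itemize}
    \item Since $\alpha \geq -4$, the slack $6 + \alpha - |E(H'')|$ is at least $2 - |E(H'')|$.
    \item Each vertex of $L_0$ contributes $2 - d$ to the deficit.
    \item Because the vertices of $L_0$ are independent and each has degree at most $1$, every vertex of $L_0$ with degree $0$ removes one potential edge of $H''$. This trades against the deficit.
\end{itemize}
A short case check over the at most three vertices (how many lie in $L_0$, and their degrees) should confirm that the deficit is at most $6 + \alpha - |E(H'')|$. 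This small-case bookkeeping is where I expect the care to be needed, and a full case check is required to confirm it.
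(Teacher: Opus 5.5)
There is a genuine gap. Your main case ($n' \geq 3$) is the paper's argument, and your rearrangement of the degenerate case to $|E(H')| \leq 2n' + \alpha$ is correct. But that rearrangement is an \emph{equivalence}: the lemma holds if and only if $|E(H')| \leq 2n'+\alpha$. So when $n' \in \{0,1\}$ and $\alpha$ is negative enough, the lemma itself is false, and no choice of $L_0$ can rescue it.

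Concretely, take $\alpha = -4$. If $V_1 = \emptyset$, then $|E(H)| = 0$ while the right-hand side is $2n + \alpha - 2n = -4$. If $|V_1| = 1$ and the colour-$1$ vertex has degree $d$, then $|E(H)| = d$ while the right-hand side is $2 + \alpha + d = d - 2$. The case check you deferred fails in exactly these situations. For $|V_1| = 1$ you would need $4 - d_a - d_b \leq 2 - d_a - d_b$, and for $V_1 = \emptyset$ you would need $6 \leq 2$.

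The missing ingredient is the standing assumption of the setup, $|V_1| > \left\lceil \frac{n+\alpha+2}{k-2} \right\rceil$. Since $n \geq 3$ and $\alpha \geq -4$ give $n + \alpha + 2 \geq 1$, this yields $|V_1| \geq 2$.

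Because $L$ lies entirely in the non-$1$ side, $V_1 \subseteq V(H')$. Hence $n' \leq 2$ forces $V(H') = V_1$ with $|V_1| = 2$. Then $H'$ is edgeless, and $0 \leq 4 + \alpha$ closes the argument. This is exactly the paper's proof, and with it your $L_0$ padding construction is unnecessary.
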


\begin{proof}
    First, by $\alpha \geq -4$ and $n \geq 3$, we have $|V_1| > \left\lceil \frac{n+\alpha+2}{k-2} \right\rceil \geq \left\lceil \frac{1}{k-2} \right\rceil \geq 1$. Thus, $|V_1| \geq 2$ holds. Next, let $S = \{v \in V_2 \cup \dots \cup V_k : d_H(v) < 2\}$. Then, since $H$ is a bipartite graph that can be divided into two parts $V_1$ and $V_2 \cup \dots \cup V_k$, we have $|E(H)| = |E(H[V \setminus S])| + \sum_{v \in S} d_H(v)$. Here:
    \begin{itemize}
        \item $|E(H[V \setminus S])| \leq 2 |V \setminus S| + \alpha$ holds. This is because, if $|V \setminus S| \geq 3$ then the inequality holds by assumption, and otherwise, since $|V_1| \geq 2$, the only possibility is that $V \setminus S = V_1$ and $|V_1| = 2$, and in this case, $|E(H[V \setminus S])| = 0 \leq 2 |V \setminus S| + \alpha$ due to $\alpha \geq -4$.
        \item Also, we have $\sum_{v \in S} d_H(v) = \sum_{v \in S} \{2 - (2 - d_H(v))\} = 2 |S| - \sum_{v \in S} (2 - d_H(v))$.
    \end{itemize}
    Summing up these two results, we have $|E(H)| \leq (2 |V \setminus S| + \alpha) + \{2 |S| - \sum_{v \in S} (2 - d_H(v))\} = 2n + \alpha - \sum_{v \in V_2 \cup \dots \cup V_k} \max(2 - d_H(v), 0)$.
\end{proof}

\begin{lemma} \label{lem:exy2}
    $\sum_{i=2}^k \sum_{j=1}^{q_i} (e_{i,j} + p_{i,j}) \leq 2n+\alpha$.
\end{lemma}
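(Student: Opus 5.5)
The plan is to derive this directly from \autoref{lem:bipartite-max-refine} by splitting both sides over the colors $i = 2, \dots, k$. Two decompositions are needed: one for the edge count of $H$ and one for the penalty term.

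\textbf{Edges.} Since $V_1, \dots, V_k$ are color classes, the graphs $H_i = G[V_1 \cup V_i]$ for distinct $i$ share no edge. Indeed, every edge of $H_i$ joins $V_1$ to $V_i$, because $V_1$ and $V_i$ are independent sets. Hence $|E(H)| = \sum_{i=2}^k |E(H_i)|$. Within a fixed $H_i$, every edge lies in exactly one component, so $|E(H_i)| = \sum_{j=1}^{q_i} e_{i,j}$. Here $e_{i,j} = |E(G[C_{i,j}])|$ equals the number of edges of $H_i$ inside $C_{i,j}$, since $C_{i,j} \subseteq V_1 \cup V_i$ and $H_i$ is an induced subgraph. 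Therefore $|E(H)| = \sum_{i,j} e_{i,j}$.

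\textbf{Penalties.} Take $v \in V_i$ with $i \ge 2$. Every neighbor of $v$ in $H$ lies in $V_1$, and every edge from $v$ to $V_1$ belongs to $H_i$. So $d_H(v) = d_{H_i}(v)$. Since the $V_i$ partition $V_2 \cup \dots \cup V_k$, and the components $C_{i,j}$ partition $V_1 \cup V_i$, we get
\[
\sum_{v \in V_2 \cup \dots \cup V_k} \max(2 - d_H(v), 0) = \sum_{i=2}^k \sum_{j=1}^{q_i} \sum_{v \in C_{i,j} \cap V_i} \max(2 - d_{H_i}(v), 0) = \sum_{i,j} p_{i,j}.
\]

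Substituting both identities into \autoref{lem:bipartite-max-refine} gives $\sum_{i,j} e_{i,j} \le 2n + \alpha - \sum_{i,j} p_{i,j}$, which rearranges to the claim. There is no real obstacle here. The only point to check carefully is that $d_H(v) = d_{H_i}(v)$ for vertices of color $i$, and this holds because each such vertex lies in exactly one $H_i$. The lemma inherits the standing assumption $|V_1| > \lceil \frac{n+\alpha+2}{k-2} \rceil$, which \autoref{lem:bipartite-max-refine} uses.
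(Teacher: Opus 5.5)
Your proposal is correct and follows the paper's proof: both decompose $|E(H)| = \sum_{i,j} e_{i,j}$, use $d_H(v) = d_{H_i}(v)$ for $v \in V_i$ to rewrite the penalty sum as $\sum_{i,j} p_{i,j}$, and substitute both into \autoref{lem:bipartite-max-refine}. Your justification of the edge decomposition, via the edge-disjointness of the $H_i$ and the fact that $H_i$ is induced, is slightly more explicit than the paper's ``by definition of $H$.''
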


\begin{proof}
    We rewrite the result of \autoref{lem:bipartite-max-refine} using $e_{i,j}$ and $p_{i,j}$. First, $|E(H)| = \sum_{i=2}^k \sum_{j=1}^{q_i} e_{i,j}$ holds by definition of $H$. Also, for $i \in \{2, \dots, k\}$, observe that $d_H(v) = d_{H_i}(v)$ holds for each $v \in V_i$. Then,
    \begin{equation*}
        \sum_{v \in V_2 \cup \dots \cup V_k} \max(2 - d_H(v), 0) = \sum_{i=2}^k \sum_{j=1}^{q_i} \sum_{v \in C_{i,j} \cap V_i} \max(2 - d_{H_i}(v), 0) = \sum_{i=2}^k \sum_{j=1}^{q_i} p_{i,j}
    \end{equation*}
    and therefore, from \autoref{lem:bipartite-max-refine}, we obtain $\sum_{i=2}^k \sum_{j=1}^{q_i} e_{i,j} \leq 2n+\alpha - \sum_{i=2}^k \sum_{j=1}^{q_i} p_{i,j}$.
\end{proof}

\begin{lemma} \label{lem:exy3}
    $e_{i,j} + p_{i,j} \geq \max(a_{i,j}+b_{i,j}-1, 2b_{i,j})$.
\end{lemma}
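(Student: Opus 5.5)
We prove the two lower bounds $e_{i,j}+p_{i,j} \geq a_{i,j}+b_{i,j}-1$ and $e_{i,j}+p_{i,j} \geq 2b_{i,j}$ separately; together they give the maximum.

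For the first bound we only need $p_{i,j} \geq 0$, which holds because every summand $\max(2-d_{H_i}(v),0)$ is nonnegative. It therefore suffices to show $e_{i,j} \geq a_{i,j}+b_{i,j}-1$. The set $C_{i,j}$ is a component of $H_i = G[V_1 \cup V_i]$ with exactly $|C_{i,j}| = a_{i,j}+b_{i,j}$ vertices. Since $H_i$ is an induced subgraph, $G[C_{i,j}] = H_i[C_{i,j}]$ is connected. A connected graph has at least as many edges as a spanning tree, so $e_{i,j} \geq |C_{i,j}|-1$.

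For the second bound we count edges from the color-$i$ side. The graph $H_i$ is bipartite with parts $V_1$ and $V_i$, since both color classes are independent. Hence every edge of $G[C_{i,j}]$ has exactly one endpoint in $C_{i,j}\cap V_i$. Moreover, all $H_i$-neighbours of a vertex $v \in C_{i,j}$ lie in the same component $C_{i,j}$. This gives
\begin{equation*}
    e_{i,j} = \sum_{v \in C_{i,j}\cap V_i} d_{H_i}(v).
\end{equation*}
Adding $p_{i,j}$ gives
\begin{equation*}
    e_{i,j}+p_{i,j} = \sum_{v \in C_{i,j}\cap V_i} \bigl(d_{H_i}(v) + \max(2-d_{H_i}(v),0)\bigr) = \sum_{v \in C_{i,j}\cap V_i} \max(d_{H_i}(v),2).
\end{equation*}
Each summand is at least $2$, and there are $b_{i,j}$ summands, so the sum is at least $2b_{i,j}$.

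Neither part is hard. The only point needing care is the identity $e_{i,j} = \sum_{v\in C_{i,j}\cap V_i} d_{H_i}(v)$, which relies on bipartiteness and on $C_{i,j}$ being a full component, so that no edge leaves it. The case $b_{i,j}=0$, where $C_{i,j}$ is a single color-$1$ vertex, is covered automatically: both sides of the second bound are $0$, and the first bound reads $0 \geq 1-1$.
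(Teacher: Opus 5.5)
Your proof is correct and follows essentially the same route as the paper: the first bound comes from connectivity of $G[C_{i,j}]$ together with $p_{i,j}\ge 0$, and the second from rewriting $e_{i,j}+p_{i,j}$ as $\sum_{v\in C_{i,j}\cap V_i}\max(d_{H_i}(v),2)\ge 2b_{i,j}$. You also justify the identity $e_{i,j}=\sum_{v\in C_{i,j}\cap V_i} d_{H_i}(v)$ explicitly via bipartiteness and component closure, a step the paper uses without comment.
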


\begin{proof}
    It suffices to prove that $e_{i,j}+p_{i,j} \geq a_{i,j}+b_{i,j}-1$ and $e_{i,j}+p_{i,j} \geq 2b_{i,j}$.
    \begin{itemize}
        \item Since $H_i[C_{i,j}]$ is connected, $e_{i,j} \geq |C_{i,j}|-1 = a_{i,j} + b_{i,j} - 1$. As $p_{i,j} \geq 0$ is obvious, we have $e_{i,j} + p_{i,j} \geq a_{i,j} + b_{i,j} - 1$.
        \item Also, $e_{i,j} + p_{i,j} = \sum_{v \in C_{i,j} \cap V_i} d_{H_i}(v) + \sum_{v \in C_{i,j} \cap V_i} \max(2 - d_{H_i}(v), 0) = \sum_{v \in C_{i,j} \cap V_i} \max(d_{H_i}(v), 2) \geq 2 |C_{i,j} \cap V_i| = 2b_{i,j}$. \qedhere
    \end{itemize}
\end{proof}

\begin{lemma} \label{lem:sum-ab}
    $\sum_{i=2}^k \sum_{j=1}^{q_i} (a_{i,j}+b_{i,j}) = (k-2)|V_1|+n$.
\end{lemma}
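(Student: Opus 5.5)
The plan is a double count: for each fixed $i$, sum over the components of $H_i$, and then sum over $i$. The only structural input needed is that the components $C_{i,1}, \dots, C_{i,q_i}$ of $H_i = G[V_1 \cup V_i]$ partition its vertex set $V_1 \cup V_i$. This holds because every vertex of $V_1 \cup V_i$ lies in exactly one component, isolated vertices included.

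For fixed $i$, each vertex of $V_1$ lies in exactly one $C_{i,j}$. Hence
\[
\sum_{j=1}^{q_i} a_{i,j} = \sum_{j=1}^{q_i} |C_{i,j} \cap V_1| = |V_1|,
\]
and in the same way $\sum_{j=1}^{q_i} b_{i,j} = |V_i|$. So the inner sum in the statement equals $|V_1| + |V_i|$.

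Summing over $i = 2, \dots, k$ gives
\[
(k-1)|V_1| + \sum_{i=2}^k |V_i| = (k-2)|V_1| + \sum_{i=1}^k |V_i| = (k-2)|V_1| + n.
\]
The last equality holds because $V_1, \dots, V_k$ partition $V$. This is the same computation as in the proof of \autoref{prop:not-all-connected}.

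There is no real obstacle. The only point that needs care is that the components partition all of $V_1 \cup V_i$, including components consisting of a single isolated vertex, so that no vertex is missed or counted twice.
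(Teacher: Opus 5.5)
Your proof is correct and follows the paper's argument: for fixed $i$ the components of $H_i$ partition $V_1 \cup V_i$, so the inner sum is $|V_1|+|V_i|$, and summing over $i=2,\dots,k$ gives $(k-2)|V_1|+n$. Your explicit note that isolated vertices count as components is a useful clarification that the paper leaves implicit.
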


\begin{proof}
    By definition, $\sum_{j=1}^{q_i} a_{i,j} = |V_1|$ and $\sum_{j=1}^{q_i} b_{i,j} = |V_i|$. Therefore, 
    
    \begin{align*}
    \sum_{i=2}^k \sum_{j=1}^{q_i} (a_{i,j}+b_{i,j}) = \sum_{i=2}^k (|V_1| + |V_i|) &= (k-2) |V_1| + (|V_1| + \dots + |V_k|) \\
    &= (k-2) |V_1| + n.
    \qedhere
    \end{align*}
\end{proof}

\begin{lemma} \label{lem:penalty1}
    $\sum_{i=2}^k \sum_{j=1}^{q_i} \max(b_{i,j}-a_{i,j}, -1) \leq -(k-2) |V_1| + (n+\alpha)$.
\end{lemma}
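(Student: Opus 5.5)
We need the inequality $\sum_{i,j}\max(b_{i,j}-a_{i,j},-1) \le -(k-2)|V_1| + (n+\alpha)$. The plan is to combine \autoref{lem:exy2}, \autoref{lem:exy3} and \autoref{lem:sum-ab}, turning the edge bound into a per-component statement about $b_{i,j}-a_{i,j}$.

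The key step is a pointwise inequality for each component:
\begin{equation*}
    \max(a_{i,j}+b_{i,j}-1,\, 2b_{i,j}) = (a_{i,j}+b_{i,j}) + \max(b_{i,j}-a_{i,j},\, -1).
\end{equation*}
This is a direct identity. Indeed, $a+b-1 = (a+b)+(-1)$ and $2b = (a+b)+(b-a)$, and the common term $a+b$ pulls out of the maximum.

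By \autoref{lem:exy3}, $e_{i,j}+p_{i,j}$ is at least this quantity. Summing over all $i=2,\dots,k$ and $j=1,\dots,q_i$ gives
\begin{equation*}
    \sum_{i,j}(a_{i,j}+b_{i,j}) + \sum_{i,j}\max(b_{i,j}-a_{i,j},-1) \le \sum_{i,j}(e_{i,j}+p_{i,j}) \le 2n+\alpha,
\end{equation*}
where the last inequality is \autoref{lem:exy2}.

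Next, substitute \autoref{lem:sum-ab}, which says the first sum on the left equals $(k-2)|V_1|+n$. Rearranging yields
\begin{equation*}
    \sum_{i,j}\max(b_{i,j}-a_{i,j},-1) \le 2n+\alpha-(k-2)|V_1|-n = -(k-2)|V_1|+(n+\alpha),
\end{equation*}
which is exactly the claim.

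There is essentially no obstacle; the only step requiring care is recognizing the identity that pulls $a_{i,j}+b_{i,j}$ out of the maximum. Everything else is a direct chaining of the three previous lemmas, which already hold under the standing assumption $|V_1| > \lceil (n+\alpha+2)/(k-2)\rceil$ used in \autoref{lem:bipartite-max-refine}.
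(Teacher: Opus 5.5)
Your proposal is correct and follows essentially the same route as the paper: subtract $a_{i,j}+b_{i,j}$ from both sides of \autoref{lem:exy3}, sum over all components, and then apply \autoref{lem:exy2} and \autoref{lem:sum-ab}. Writing the identity $\max(a+b-1,\,2b) = (a+b) + \max(b-a,\,-1)$ out explicitly is just a more verbose version of the paper's ``subtracting from both sides'' step.
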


\begin{proof}
    We obtain $\max(b_{i,j}-a_{i,j}, -1) \leq (e_{i,j}+p_{i,j}) - (a_{i,j}+b_{i,j})$ by subtracting $a_{i,j}+b_{i,j}$ from both sides of the inequality in \autoref{lem:exy3}. Therefore:
    \begin{eqnarray*}
        \sum_{i=2}^k \sum_{j=1}^{q_i} \max(b_{i,j}-a_{i,j}, -1) &\leq& \sum_{i=2}^k \sum_{j=1}^{q_i} (e_{i,j}+p_{i,j}) - \sum_{i=2}^k \sum_{j=1}^{q_i} (a_{i,j}+b_{i,j}) \\ &\leq& (2n+\alpha) - \{(k-2)|V_1|+n\}
    \end{eqnarray*}
    where the last inequality is due to \autoref{lem:exy2} and \autoref{lem:sum-ab}. Note that $(2n+\alpha) - \{(k-2)|V_1|+n\} = -(k-2) |V_1| + (n+\alpha)$.
\end{proof}

The implication of \autoref{lem:penalty1} is that the larger $|V_1|$ is, the more components $C_{i,j}$ such that $a_{i,j} > b_{i,j}$ there are. Especially, given that $|V_1| > \left\lceil \frac{n+\alpha+2}{k-2} \right\rceil$ (which implies $|V_1| \geq \frac{n+\alpha+2}{k-2} + 1$), \autoref{lem:penalty1} states that $\sum_{i=2}^k \sum_{j=1}^{q_i} \max(b_{i,j}-a_{i,j}, -1) \leq -k$. This means that $\sum_{j=1}^{q_i} \max(b_{i,j}-a_{i,j}, -1) \leq -2$ for some $i$, and hence $H_i$ has at least two components.

\subsection{Part 2: The Algorithm} \label{sec:proof-2}

In this section, we derive a fast algorithm to obtain a $k$-coloring that is balanced enough. The strategy is to repeat the steps that make the coloring $\varphi$ more balanced, using Kempe changes.

For a subset $S \subseteq \{1, \dots, q_i\}$, if we apply Kempe changes to all $C_{i,j}$ such that $j \in S$, then the number of vertices of color $1$ becomes $|V_1| - \sum_{j \in S} (a_{i,j}-b_{i,j})$, and the number of vertices of color $i$ becomes $|V_i| + \sum_{j \in S} (a_{i,j}-b_{i,j})$. From now on, we discuss whether it is possible to choose a subset $S \subseteq \{1, \dots, q_i\}$ on which to apply Kempe changes, so that the coloring $\varphi$ improves.

We define $\delta := |V_1| - \left(\frac{n+\alpha+1}{k-2}+1\right)$ to express how excessive the number of vertices of color $1$ is compared to the threshold; e.g., $\delta = \frac{1}{k-2}$ when $|V_1| = \frac{n+\alpha+2}{k-2}+1$ (the case with minimum $|V_1|$).

\begin{lemma} \label{lem:penalty2}
    There exists some $i$ such that $\sum_{j=1}^{q_i} \max(b_{i,j}-a_{i,j}, -1) \leq -\left(\frac{1}{2} \delta + 1\right)$.
\end{lemma}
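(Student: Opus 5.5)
The plan is to average the global bound of \autoref{lem:penalty1} over the $k-1$ color indices $i = 2, \dots, k$, after rewriting its right-hand side in terms of $\delta$.

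First, substitute the definition of $\delta$, that is $|V_1| = \delta + \frac{n+\alpha+1}{k-2} + 1$, into \autoref{lem:penalty1}. This gives $(k-2)|V_1| = (k-2)\delta + (n+\alpha+1) + (k-2)$, and hence
\begin{equation*}
\sum_{i=2}^k \sum_{j=1}^{q_i} \max(b_{i,j}-a_{i,j}, -1) \leq -(k-2)\delta - (k-2) - 1 = -\bigl((k-2)(\delta+1) + 1\bigr).
\end{equation*}
Since there are $k-1$ values of $i$, pigeonhole (the minimum is at most the average) gives some $i$ with
\begin{equation*}
\sum_{j=1}^{q_i} \max(b_{i,j}-a_{i,j}, -1) \leq -\frac{(k-2)(\delta+1)+1}{k-1}.
\end{equation*}

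It then remains to check that $\frac{(k-2)(\delta+1)+1}{k-1} \geq \frac{1}{2}\delta + 1$. Multiplying by $k-1 > 0$, this is equivalent to
\begin{equation*}
(k-2)\delta + (k-1) \geq \frac{k-1}{2}\delta + (k-1),
\end{equation*}
that is, to $\frac{k-3}{2}\,\delta \geq 0$. This inequality needs $\delta \geq 0$. That follows from the standing assumption $|V_1| > \left\lceil \frac{n+\alpha+2}{k-2} \right\rceil$, which gives $|V_1| \geq \frac{n+\alpha+2}{k-2} + 1$ and therefore $\delta \geq \frac{1}{k-2} > 0$. It also needs $k \geq 3$, which is part of the hypothesis of \autoref{thm:main}.

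The only delicate point is this final comparison. The factor $\frac12$ in front of $\delta$ is precisely what makes the averaging loss harmless for every $k \geq 3$: equality holds at $k=3$, and there is slack for larger $k$. It is also where the sign of $\delta$ matters. Everything else is direct substitution into \autoref{lem:penalty1}.
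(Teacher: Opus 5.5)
Your proposal is correct and follows the paper's proof: average \autoref{lem:penalty1} over the $k-1$ indices $i=2,\dots,k$, rewrite the bound as $-\bigl(\frac{k-2}{k-1}\delta+1\bigr)$, and compare with $-\bigl(\frac{1}{2}\delta+1\bigr)$ using $k\geq 3$. You are slightly more explicit than the paper in noting that this last comparison also needs $\delta\geq 0$, which follows from the standing assumption on $|V_1|$.
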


\begin{proof}
    By \autoref{lem:penalty1}, there exists some $i$ such that
    
    $$\sum_{j=1}^{q_i} \max(b_{i,j}-a_{i,j}, -1) \leq \frac{-(k-2)|V_1|+(n+\alpha)}{k-1} = -\left(\frac{k-2}{k-1} \delta +1\right) \leq -\left(\frac{1}{2} \delta + 1\right)$$
    
    \noindent
    where the final inequality is due to $k \geq 3$.
\end{proof}

\begin{lemma} \label{lem:max-cond}
    For $i \in \{2, \dots, k\}$ and $x > 0$, if $\sum_{j=1}^{q_i} \max(b_{i,j}-a_{i,j}, -1) \leq -x$, then the following holds:
    \begin{enumerate}
        \item[(a)] $|V_1| > |V_i|$.
        \item[(b)] $a_{i,j} - b_{i,j} \leq (|V_1|-|V_i|)-(x-1)$ for all $j \in \{1, \dots, q_i\}$.
    \end{enumerate}
\end{lemma}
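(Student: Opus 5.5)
The plan is to compare the term $\max(b_{i,j}-a_{i,j},-1)$ with $b_{i,j}-a_{i,j}$ itself, summed over the components of $H_i$. Write $d_j := a_{i,j}-b_{i,j}$. Then $\max(-d_j,-1) = -\min(d_j,1)$, so the hypothesis reads
$$\sum_{j=1}^{q_i} \min(d_j,1) \geq x.$$
On the other hand, $\sum_j d_j = |V_1|-|V_i|$ by the identities used in \autoref{lem:sum-ab}, namely $\sum_j a_{i,j} = |V_1|$ and $\sum_j b_{i,j} = |V_i|$. Both (a) and (b) will come from playing these two sums against each other.

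For (a), use the termwise bound $d_j \geq \min(d_j,1)$. Summing over $j$ gives
$$|V_1|-|V_i| \;=\; \sum_j d_j \;\geq\; \sum_j \min(d_j,1) \;\geq\; x \;>\; 0,$$
which is exactly $|V_1|>|V_i|$.

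For (b), fix an index $j_0$ and split the sum as
$$|V_1|-|V_i| = d_{j_0} + \sum_{j\neq j_0} d_j.$$
Apply $d_j \geq \min(d_j,1)$ to every $j\neq j_0$, which gives
$$|V_1|-|V_i| \;\geq\; d_{j_0} + \sum_{j} \min(d_j,1) - \min(d_{j_0},1) \;\geq\; d_{j_0} + x - 1,$$
using $\min(d_{j_0},1)\leq 1$. Rearranging yields $d_{j_0} \leq (|V_1|-|V_i|)-(x-1)$, which is the claim.

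There is no real obstacle here; the only care needed is the bookkeeping of the component whose $\min(\cdot,1)$ term is removed in (b). The point to get right is that removing that term costs at most $1$, and this is where the $x-1$ comes from.
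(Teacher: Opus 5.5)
Your proof is correct and is essentially the paper's argument, rewritten with $d_j=a_{i,j}-b_{i,j}$ and $\min(d_j,1)$ in place of $\max(b_{i,j}-a_{i,j},-1)$. The paper gets (a) from the same termwise bound; for (b) it notes that the non-negative slacks $\max(b_{i,j}-a_{i,j},-1)-(b_{i,j}-a_{i,j})$ sum to at most $(|V_1|-|V_i|)-x$, so each single slack does too, and then uses $\max(\cdot,-1)\ge -1$, which matches your step of dropping the $j\neq j_0$ terms and bounding $\min(d_{j_0},1)\le 1$.
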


\begin{proof}
    First, $|V_1| = \sum_{j=1}^{q_i} a_{i,j}$ and $|V_i| = \sum_{j=1}^{q_i} b_{i,j}$ by definition; (a) is shown by:
    \begin{equation*}
        |V_i| - |V_1| = \sum_{j=1}^{q_i} (b_{i,j}-a_{i,j}) \leq \sum_{j=1}^{q_i} \max(b_{i,j}-a_{i,j}, -1) \leq -x < 0
    \end{equation*}
    It remains to prove (b). By assumption, we have:
    \begin{equation*}
        \sum_{j=1}^{q_i} \{\max(b_{i,j}-a_{i,j}, -1) - (b_{i,j}-a_{i,j})\} \leq -x - (|V_i|-|V_1|) = (|V_1|-|V_i|) - x
    \end{equation*}
    Since $\max(b_{i,j}-a_{i,j}, -1) - (b_{i,j}-a_{i,j})$ is non-negative for all $j$, $\max(b_{i,j}-a_{i,j}, -1) - (b_{i,j}-a_{i,j}) \leq (|V_1| - |V_i|) - x$ must hold for all $j$. Therefore, $(-1) - (b_{i,j}-a_{i,j}) \leq (|V_1| - |V_i|) - x$ also holds, which means that $a_{i,j}-b_{i,j} \leq (|V_1|-|V_i|)-(x-1)$. This proves (b).
\end{proof}

\begin{lemma} \label{lem:subsetsum1}
    Let $s, t$ be real numbers such that $s > t \geq 0$. Let $d_1, \dots, d_m$ be (not necessarily non-negative) real numbers. If $d_1 + \dots + d_m = s$ and $d_1, \dots, d_m \leq t$, then there exists a subset $S \subseteq \{1, \dots, m\}$ such that $\frac{s-t}{2} \leq \sum_{i \in S} d_i \leq \frac{s+t}{2}$.
\end{lemma}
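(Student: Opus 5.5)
We build the subset greedily by a prefix-sum argument. The idea is to grow a running sum by adding one $d_i$ at a time. Each step moves the sum by at most $t$ upward. We will arrange the order so that the running sum cannot jump over the target window $\left[\frac{s-t}{2}, \frac{s+t}{2}\right]$, which has width exactly $t$.

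\textbf{Ordering.} Order the indices so that all negative $d_i$ come first, followed by all non-negative $d_i$ (in arbitrary order). Consider the prefix sums $P_0 = 0, P_1, \dots, P_m = s$. During the negative block the prefix sums decrease, so they stay $\le 0 \le \frac{s-t}{2}$; note $\frac{s-t}{2} > 0$ since $s > t$. During the non-negative block the prefix sums are non-decreasing. Each step increases the sum by $d_i \in [0, t]$, and the sums eventually reach $P_m = s \ge \frac{s+t}{2}$ (as $s > t$).

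\textbf{Locating the window.} Let $r$ be the smallest index with $P_r \ge \frac{s-t}{2}$. Such an $r$ exists because $P_m = s \ge \frac{s-t}{2}$, and $r \ge 1$ because $P_0 = 0 < \frac{s-t}{2}$. Since all prefix sums in the negative block are $\le 0$, the index $r$ lies in the non-negative block. Hence $P_r = P_{r-1} + d_r$ with $d_r \le t$. By minimality of $r$,
\[
P_r < \frac{s-t}{2} + t = \frac{s+t}{2}.
\]
So $S$, the set of the first $r$ indices in the chosen order, satisfies
\[
\frac{s-t}{2} \le \sum_{i\in S} d_i \le \frac{s+t}{2},
\]
as required.

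\textbf{Main subtlety.} The only real obstacle is handling negative $d_i$, since upward jumps are bounded but downward ones are not. Placing the negative entries first resolves this: the sum only decreases before the climb starts, so downward moves never cause the sum to skip past the window. The degenerate case $t = 0$ is also covered, since then all $d_i \le 0$ contradicts $s > 0$ unless some $d_i = 0$. In fact it is consistent: $t = 0$ forces $s \le 0$, contradicting $s > t$, so that case is vacuous. Edge cases with $m$ small need no separate treatment.
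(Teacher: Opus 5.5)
Your proof is correct and is essentially the paper's argument: take the first prefix sum that reaches $\frac{s-t}{2}$, then use $d_r \le t$ to bound it by $\frac{s+t}{2}$. The negative-first reordering and the ``main subtlety'' discussion are unnecessary, because minimality of $r$ already gives $P_{r-1} < \frac{s-t}{2}$ and $d_r \le t$ holds for every index; this is why the paper simply keeps the original order and takes $S = \{1, \dots, i\}$.
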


\begin{proof}
    Let $f(i) = d_1 + \dots + d_i \ (i \in \{0, \dots, m\})$. Then, there exists $i \in \{1, \dots, m\}$ such that $f(i-1) < \frac{s-t}{2}$ and $f(i) \geq \frac{s-t}{2}$. This is because $f(0) = 0 < \frac{s-t}{2}$ and $f(m) = s \geq \frac{s-t}{2}$. For such $i$, we have $f(i) = f(i-1)+d_i < \frac{s-t}{2} + t = \frac{s+t}{2}$. Therefore, by choosing $S = \{1, \dots, i\}$, we have $\frac{s-t}{2} \leq \sum_{i \in S} d_i \leq \frac{s+t}{2}$.
\end{proof}

\begin{lemma} \label{lem:subsetsum2}
    For $i \in \{2, \dots, k\}$ and $x > 1$, if $\sum_{j=1}^{q_i} \max(b_{i,j}-a_{i,j}, -1) \leq -x$, then there exists a subset $S \subseteq \{1, \dots, q_i\}$ such that $\frac{x-1}{2} \leq \sum_{j \in S} (a_{i,j}-b_{i,j}) \leq (|V_1|-|V_i|) - \frac{x-1}{2}$.
\end{lemma}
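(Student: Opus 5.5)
The plan is to apply \autoref{lem:subsetsum1} to the numbers $d_j := a_{i,j}-b_{i,j}$ for $j = 1, \dots, q_i$, with $s := |V_1|-|V_i|$ and $t := (|V_1|-|V_i|)-(x-1)$. The needed hypotheses come straight from \autoref{lem:max-cond}, whose assumption matches ours since $x > 1 > 0$.

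First, by definition $\sum_{j=1}^{q_i} d_j = \sum_j a_{i,j} - \sum_j b_{i,j} = |V_1|-|V_i| = s$. Second, \autoref{lem:max-cond}(b) gives $d_j \leq (|V_1|-|V_i|)-(x-1) = t$ for every $j$. Third, since $x > 1$ we have $x - 1 > 0$, hence $s > t$.

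The hypothesis of \autoref{lem:subsetsum1} that needs genuine work is $t \geq 0$, i.e.\ $|V_1|-|V_i| \geq x-1$. I would get this from the assumption itself. Each summand satisfies $\max(b_{i,j}-a_{i,j},-1) \geq b_{i,j}-a_{i,j}$, but that inequality points the wrong way, so I would use the other bound: each term is at least $-1$, which gives $q_i \geq x$. I do not expect this alone to suffice, so I would instead pick any single $j$ and use $d_j \leq t$ as follows. Write $m_j := \max(b_{i,j}-a_{i,j},-1)$. Then $-d_j = b_{i,j}-a_{i,j} \leq m_j$, and $\sum_j m_j \leq -x$. Summing gives $s = \sum_j d_j \geq -\sum_j m_j \geq x$. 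Hence $t = s-(x-1) \geq 1 > 0$.

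With all hypotheses verified, \autoref{lem:subsetsum1} yields $S$ with $\frac{s-t}{2} \leq \sum_{j\in S} d_j \leq \frac{s+t}{2}$. Here $\frac{s-t}{2} = \frac{x-1}{2}$ and $\frac{s+t}{2} = (|V_1|-|V_i|) - \frac{x-1}{2}$, which is exactly the claimed interval. The only step that is not pure bookkeeping is establishing $t \geq 0$ through $s \geq x$.
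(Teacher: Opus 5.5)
Your proposal is correct and follows the paper's proof: you use the same instantiation of \autoref{lem:subsetsum1} with $d_j = a_{i,j}-b_{i,j}$, $s = |V_1|-|V_i|$, $t = s-(x-1)$, and take the hypotheses from \autoref{lem:max-cond}. The one difference is the check $t \geq 0$: you sum $-d_j \leq \max(b_{i,j}-a_{i,j},-1)$ to get $s \geq x$ and hence $t \geq 1$, whereas the paper argues by contradiction from $s > 0$ and $d_j \leq t < 0$; both work, though your aside that this inequality ``points the wrong way'' is contradicted by your own correct use of it.
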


\begin{proof}
    Because of \autoref{lem:max-cond}, we can apply \autoref{lem:subsetsum1} with $m := q_i, d_j := a_{i,j}-b_{i,j}$, and $s = |V_1|-|V_i|, t = (|V_1|-|V_i|)-(x-1)$. Note that $t \geq 0$ holds because otherwise $d_1, \dots, d_m \leq t < 0$ and $d_1 + \dots + d_m = s > 0$ would contradict. The consequence is that there exists a subset $S \subseteq \{1, \dots, q_i\}$ such that $\frac{s-t}{2} \leq \sum_{j \in S} (a_{i,j}-b_{i,j}) \leq \frac{s+t}{2}$. Here, $\frac{s-t}{2} = \frac{x-1}{2}$ and $\frac{s+t}{2} = (|V_1|-|V_i|) - \frac{x-1}{2}$.
\end{proof}

\begin{lemma} \label{lem:subsetsum3}
    For some $i$, there exists a subset $S \subseteq \{1, \dots, q_i\}$ such that $\frac{1}{4} \delta \leq \sum_{j \in S} (a_{i,j}-b_{i,j}) \leq (|V_1|-|V_i|) - \frac{1}{4} \delta$.
\end{lemma}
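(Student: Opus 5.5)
The plan is to chain \autoref{lem:penalty2} with \autoref{lem:subsetsum2}. First I will invoke \autoref{lem:penalty2}. It produces an index $i \in \{2, \dots, k\}$ with
\[
\sum_{j=1}^{q_i} \max(b_{i,j}-a_{i,j}, -1) \leq -x, \qquad x := \tfrac{1}{2}\delta + 1 .
\]
Next I will apply \autoref{lem:subsetsum2} to this $i$ with this $x$. Its conclusion is a subset $S \subseteq \{1, \dots, q_i\}$ with
\[
\tfrac{x-1}{2} \leq \sum_{j \in S}(a_{i,j}-b_{i,j}) \leq (|V_1|-|V_i|) - \tfrac{x-1}{2}.
\]
Since $\frac{x-1}{2} = \frac{1}{4}\delta$, this is exactly the claimed inequality.

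The one hypothesis to check is that \autoref{lem:subsetsum2} needs $x > 1$, i.e.\ $\delta > 0$. I will get this from the setup assumption $|V_1| > \left\lceil \frac{n+\alpha+2}{k-2} \right\rceil$. As $|V_1|$ is an integer, this gives $|V_1| \geq \left\lceil \frac{n+\alpha+2}{k-2} \right\rceil + 1 \geq \frac{n+\alpha+2}{k-2} + 1$. Hence
\[
\delta = |V_1| - \left(\tfrac{n+\alpha+1}{k-2} + 1\right) \geq \tfrac{1}{k-2} > 0,
\]
using $k \geq 3$. Therefore $x = \frac{1}{2}\delta + 1 > 1$, and the lemma applies.

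Nothing here is a genuine obstacle. The only delicate points are confirming that $\delta$ is strictly positive, which uses the integrality of $|V_1|$ and the ceiling, and checking that the same index $i$ supplied by \autoref{lem:penalty2} is the one fed into \autoref{lem:subsetsum2}, so that the $x$ there matches.
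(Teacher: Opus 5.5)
Your proposal is correct and matches the paper's proof: you invoke \autoref{lem:penalty2} with $x = \frac{1}{2}\delta + 1$, feed the same index $i$ into \autoref{lem:subsetsum2}, and observe that $\frac{x-1}{2} = \frac{1}{4}\delta$. Your check that $\delta \geq \frac{1}{k-2} > 0$, so that $x > 1$, is correct; the paper leaves this hypothesis implicit, so spelling it out is a small improvement.
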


\begin{proof}
    By \autoref{lem:penalty2}, there exists some $i$ such that $\sum_{j=1}^{q_i} \max(b_{i,j} - a_{i,j}, -1) \leq -\left(\frac{1}{2} \delta + 1\right)$. By assigning $x = \frac{1}{2} \delta + 1$ to \autoref{lem:subsetsum2}, we obtain the following: there exists a subset $S \subseteq \{1, \dots, q_i\}$ such that
    \begin{equation*}
        \frac{1}{4}\delta \leq \sum_{j\in S}(a_{i,j}-b_{i,j}) \leq (|V_1|-|V_i|)-\frac{1}{4}\delta \qedhere
    \end{equation*}
\end{proof}

The implication of \autoref{lem:subsetsum3} is that the more unbalanced the coloring $\varphi$ is, the more improvement we can make in one batch of Kempe changes. Based on these lemmas, we can construct an algorithm to find a sufficiently balanced coloring. The algorithm is formally given in \autoref{alg:main}.

\begin{algorithm}[htbp]
    \caption{The algorithm which, given a graph $G = (V, E)$ and its $k$-coloring $\varphi: V \to \{1, \dots, k\}$, computes a $k$-coloring that each color is used on at most $\left\lceil \frac{n+\alpha+2}{k-2} \right\rceil$ vertices.}
    \label{alg:main}
    \begin{algorithmic}[1]
        \While {some color is used on more than $\left\lceil \frac{n+\alpha+2}{k-2} \right\rceil$ vertices}
            \State rearrange colors in $\varphi$ so that color $1$ becomes the most frequently used color
            \State compute $V_i, H_i, C_{i,j}, a_{i,j}, b_{i,j}$ and $\delta := |V_1| - \left(\frac{n+\alpha+1}{k-2}+1\right)$
            \State choose $i \in \{2, \dots, k\}$ such that $\sum_{j=1}^{q_i} \max(b_{i,j}-a_{i,j}, -1) \leq -\left(\frac{1}{2} \delta + 1\right)$ (exists by \autoref{lem:penalty2})
            \State choose $S \subseteq \{1, \dots, q_i\}$ such that $\frac{1}{4} \delta \leq \sum_{j \in S} (a_{i,j}-b_{i,j}) \leq (|V_1|-|V_i|) - \frac{1}{4} \delta$ (exists by \autoref{lem:subsetsum3})
            \State apply Kempe change for component $C_{i,j}$, for each $j \in S$
        \EndWhile
        \State \Return $\varphi$
    \end{algorithmic}
\end{algorithm}

\subsection{Part 3: Analysis of Time Complexity} \label{sec:proof-3}

\begin{lemma} \label{lem:num-loops}
    In \autoref{alg:main}, at most $k \cdot \lceil \log_{4/3} (nk) \rceil$ loops will occur.
\end{lemma}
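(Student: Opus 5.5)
The plan is a potential argument on the \emph{excess} of each color class over the threshold $T := \frac{n+\alpha+1}{k-2}+1$. For each color $c$, let $\varepsilon_c := |V_c| - T$, so that in a loop where color $1$ is the most frequent we have $\delta = \max_c \varepsilon_c$. First I will record when the loop runs. Since $|V_1|$ is an integer, the loop condition $|V_1| > \left\lceil \frac{n+\alpha+2}{k-2} \right\rceil$ is equivalent to $|V_1| \geq \frac{n+\alpha+2}{k-2}+1$, i.e.\ to $\delta \geq \frac{1}{k-2} > \frac{1}{k}$. At the start, $\delta \leq |V_1| \leq n$: indeed $T \geq \frac{-3}{k-2}+1 \geq -2$, and more simply $T>0$ whenever the loop runs at all, though any bound of the form $\delta \leq n$ suffices up to constants. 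I will check this initial bound carefully so that the logarithm comes out as exactly $\log_{4/3}(nk)$.

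Next I analyze one loop. By \autoref{lem:subsetsum3}, the chosen $S$ shifts $\sigma := \sum_{j\in S}(a_{i,j}-b_{i,j})$ vertices from color $1$ to color $i$, with $\frac14\delta \leq \sigma \leq (|V_1|-|V_i|) - \frac14\delta$. Hence the new excess of color $1$ is $\delta - \sigma \leq \frac34\delta$. The new excess of color $i$ is $\varepsilon_i + \sigma \leq \varepsilon_i + (|V_1|-|V_i|) - \frac14\delta = \frac34\delta$. All other excesses are unchanged. So a single loop replaces the current maximum excess $M=\delta$ and one other excess by values at most $\frac34 M$, and it never increases the maximum.

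The main step is to turn this into geometric decay over blocks of $k$ loops. Fix a moment at which the maximum excess is $M$, and let $B$ be the set of colors whose excess is $> \frac34 M$. While the maximum stays above $\frac34 M$, each loop picks some color in $B$ as color $1$ and pushes it to excess $\leq \frac34 M$. The partner color $i$ also ends with excess $\leq \frac34 M$, and no color ever rises above $\frac34 M$ from below. Therefore $|B|$ strictly decreases in each loop, and after at most $k$ loops the maximum excess is at most $\frac34 M$. I expect this counting to be the main point to argue carefully, specifically that a color cannot re-enter $B$ within a block. Iterating, after $k\cdot p$ loops the maximum excess is at most $(\frac34)^p n$. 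With $p = \lceil \log_{4/3}(nk)\rceil$ this is at most $\frac1k < \frac1{k-2}$, so the loop condition fails. Hence at most $k\lceil\log_{4/3}(nk)\rceil$ loops occur.
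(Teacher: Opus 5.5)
Your proposal is correct and follows the paper's proof essentially step for step. The shared steps are: the per-loop bound that both color $1$ and the partner color $i$ end at most $\frac34\delta$ above $T$, so $\delta$ is non-increasing; the count of colors exceeding $\frac34\delta_0+T$ dropping in every loop, so that $\delta$ shrinks by a factor $\frac34$ every $k$ loops; and the iteration from $\delta<n$ down past the termination level. One small cleanup: your aside $T\geq -2$ is unnecessary and would not by itself give $\delta\leq n$, but since $n+\alpha+1\geq 0$ you always have $T\geq 1$, hence $\delta\leq n-1$ from the start.
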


\begin{proof}
    We observe that, after processing line 6 of \autoref{alg:main}, $|V_1|$ and $|V_i|$ both become at most $|V_1| - \frac{1}{4} \delta$. Therefore, $\delta$ is non-increasing throughout all loops.

    Let $\delta_0$ be the current $\delta$. We consider a moment after that where $\delta > \frac{3}{4} \delta_0$. Note that $\delta \leq \delta_0$ by the observation above. After processing line 6 for this loop, $|V_1|$ and $|V_i|$ both become at most
    \begin{equation*}
        |V_1| - \frac{1}{4} \delta = \frac{3}{4} \delta + \left(\frac{n+\alpha+1}{k-2}+1\right) \leq \frac{3}{4} \delta_0 + \left(\frac{n+\alpha+1}{k-2}+1\right)
    \end{equation*}
    by definition of $\delta$. Therefore, the number of $i$'s that $|V_i| > \frac{3}{4} \delta_0 + \left(\frac{n+\alpha+1}{k-2}+1\right)$ will decrease and eventually becomes zero after at most $k$ loops. This means that, after $k$ loops, $\delta$ will be at most $\frac{3}{4}$ times its original value.

    Since $\delta < n$ at the beginning, $\delta$ becomes less than $\frac{1}{k}$ after $k \cdot \lceil \log_{4/3} (nk) \rceil$ loops. However, while $|V_1| > \left\lceil \frac{n+\alpha+2}{k-2} \right\rceil$, we have $\delta \geq \frac{1}{k}$. Therefore, at most $k \cdot \lceil \log_{4/3} (nk) \rceil$ loops will occur.
\end{proof}

Finally, we show that we can compute a $k$-coloring of $G$ such that each color is used in at most $\left\lceil \frac{n+\alpha+2}{k-2} \right\rceil$ vertices, in $O(kn \log n)$ time, which proves \autoref{thm:main}.

\begin{lemma} \label{lem:max-edges}
    $|E(G)| \leq 4n+2\alpha$.
\end{lemma}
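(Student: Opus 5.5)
We prove $|E(G)| \le 4n+2\alpha$ with a bipartite double-cover argument, since the hypothesis of \autoref{thm:main} only controls bipartite subgraphs. The natural idea is to split the edge set of $G$ into two bipartite subgraphs, each on at most $n$ vertices, and apply the bound $|E(H)| \le 2|V(H)|+\alpha$ to each. This gives $2(2n+\alpha)=4n+2\alpha$. A deterministic split is not available for $k$-colorable $G$ with $k>4$, so we average instead.

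First, we make the averaging precise. For a subset $T \subseteq \{1,\dots,k\}$ of colors, let $B_T$ be the bipartite subgraph of $G$ with parts $\bigcup_{i\in T} V_i$ and $\bigcup_{i\notin T} V_i$. It contains exactly those edges whose endpoints' colors are separated by $T$. Since $\varphi$ is proper, every edge joins two distinct colors. If $T$ is chosen uniformly at random among all $2^k$ subsets, each edge lies in $B_T$ with probability exactly $1/2$. Hence
\[
\mathbb{E}\bigl[|E(B_T)|\bigr] = |E(G)|/2,
\]
so some $T$ has $|E(B_T)| \ge |E(G)|/2$.

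Next, we apply the hypothesis to $B_T$. It spans all $n \ge 3$ vertices (isolated vertices included), so $|E(B_T)| \le 2n+\alpha$. Combining with the previous step gives $|E(G)| \le 2|E(B_T)| \le 4n+2\alpha$.

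The only delicate point is that $B_T$ must be a genuine bipartite subgraph with at least $3$ vertices. This holds because we take $V(B_T)=V$ rather than only the non-isolated vertices, and $n\ge 3$. The averaging step is where care is needed: one might try to pair colors into two fixed bipartitions, but that fails for $k\ge 5$. Uniform random bipartitions of the color set circumvent this.
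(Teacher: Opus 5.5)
Your proof is correct and is essentially the paper's argument. Both average over random cuts on all $n \ge 3$ vertices: each edge crosses with probability $1/2$, and every cut is a bipartite subgraph with at most $2n+\alpha$ edges, so $|E(G)|/2 \le 2n+\alpha$. The only difference is that the paper chooses a uniformly random vertex subset $S \subseteq V$ directly, and therefore never uses the coloring $\varphi$, whereas your random subset of color classes is a special case of the same averaging.
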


\begin{proof}
    If we choose $S \subseteq V$ uniformly at random (out of $2^n$ possible choices), the edges between $S$ and $V \setminus S$ form a bipartite subgraph. By assumption, the number of edges in this subgraph is at most $2n + \alpha$. However, since each edge is chosen with probability $\frac{1}{2}$, the expected number of edges is $\frac{1}{2} |E(G)|$. Therefore, $\frac{1}{2} |E(G)| \leq 2n+\alpha$, which means that $|E(G)| \leq 4n+2\alpha$.
\end{proof}

\begin{lemma} \label{lem:loop-time}
    Each loop of \autoref{alg:main} can be executed in $O(n)$ time.
\end{lemma}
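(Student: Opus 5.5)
The plan is to go through the six lines of \autoref{alg:main} one by one and show that each takes $O(n)$ time. The graph representation is an adjacency list, and by \autoref{lem:max-edges} we have $|E(G)| \leq 4n+2\alpha = O(n)$ since $\alpha$ is a constant. So any procedure that does linear work in $|V|+|E|$ runs in $O(n)$.

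\textbf{Color bookkeeping and $\delta$ (lines 1--3).} We keep the counts $|V_1|,\dots,|V_k|$ up to date, or simply recount them in $O(n)$ per loop. Then:
\begin{itemize}
\item the while-condition and the choice of the most frequent color cost $O(k)\subseteq O(n)$, assuming $k \leq n$; colors never used can be ignored, which justifies this;
\item relabeling so that this color is called $1$ can be done with a permutation array and no recoloring, or by a single $O(n)$ pass over the vertices.
\end{itemize}

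\textbf{Components of all $H_i$ at once (line 3).} Naively building each $H_i$ separately would cost $O(kn)$, which is the main point to watch. Instead:
\begin{itemize}
\item Every edge of $H = H_2 \cup \dots \cup H_k$ has exactly one endpoint of color $1$, and its other endpoint's color $i$ determines which $H_i$ it belongs to. So the edge sets of the $H_i$ partition $E(H)$.
\item Each vertex of $V_i$ with $i\ge 2$ lies in exactly one $H_i$. Each vertex of $V_1$ lies in all $k-1$ of them, but in $H_i$ it is isolated unless it has a neighbor of color $i$.
\item Run BFS/DFS on each $H_i$, starting only from vertices of color $i$ and non-isolated color-$1$ vertices. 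Each edge of $E(H)$ is scanned $O(1)$ times, for $O(n)$ total work.
\item Each remaining color-$1$ vertex forms a singleton component with $a_{i,j}=1$, $b_{i,j}=0$. There may be up to $k|V_1|$ of them, so we do not list them. We record for each $i$ only their number $r_i = |V_1| - (\text{color-1 vertices touched in } H_i)$, computable in time proportional to the work already done.
\end{itemize}
If the paper's $O(kn)$-per-loop view is acceptable, the naive bound also suffices given $k$-dependence in the final $O(kn\log n)$. I will aim for $O(n)$ but note this fallback.

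\textbf{Choosing $i$ and $S$, and recoloring (lines 4--6).}
\begin{itemize}
\item The sums $\sum_j \max(b_{i,j}-a_{i,j},-1)$ are computed for every $i$ from the explicit components, plus $-r_i$ for the singletons. This takes $O(n)$ total, after which we pick a valid $i$.
\item Finding $S$ uses the prefix-sum argument of \autoref{lem:subsetsum1}: scan the components in any order, accumulating $a_{i,j}-b_{i,j}$, and stop at the first prefix reaching $\frac{1}{4}\delta$, which is $\frac{s-t}{2}$ with $x=\frac12\delta+1$. Singleton components each contribute $+1$, so they can be appended in bulk arithmetically, and only the needed number of them are materialized, each one an actual vertex.
\item The Kempe changes on the chosen components recolor each vertex at most once, costing $O(n)$.
\end{itemize}
Summing over the lines gives $O(n)$ per loop. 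The main obstacle is the multiplicity of color-$1$ vertices across the $k-1$ graphs $H_i$, which the singleton-counting trick above handles.
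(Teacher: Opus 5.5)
Your proposal is correct and is essentially the paper's argument: in both, the point is to avoid enumerating the up to $(k-1)|V_1|$ isolated color-$1$ vertices across the $H_i$'s, so that the traversals cost $O(\sum_{i\ge 2}(|V_i|+|E(H_i)|))=O(n)$, and to account for those vertices by counting. Lines 5--6 involve only one $H_i$ with $|V(H_i)|+|E(H_i)|=O(n)$, so your bulk handling of singletons there is harmless but unnecessary. Drop the ``fallback'' remark: an $O(kn)$ per-loop cost times the $k\lceil\log_{4/3}(nk)\rceil$ loops of \autoref{lem:num-loops} gives $O(k^2 n\log n)$, not $O(kn\log n)$, and it would not prove the stated $O(n)$ bound anyway.
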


\begin{proof}
    We first discuss the required time complexity to find $i$ such that $\sum_{j=1}^{q_i} \max(b_{i,j}-a_{i,j}, -1)$. The naive implementation that computes all components $C_{i,j}$ with depth-first search (DFS) takes $O(|V(H_i)| + |E(H_i)|)$ time; summing over all $i = 2, \dots, k$ could yield a total time of $\Theta(kn)$ if $|V_1|$ is large. However, it is enough to enumerate all components except isolated vertices; then, at most $2 |E(H_i)|$ vertices are concerned, so the time complexity for DFS becomes only $O(|E(H_i)|)$. For the isolated vertices, we only need to count the unvisited vertices in $V_1$ and $V_i$ and add their contribution to the result $\sum_{j=1}^{q_i} \max(b_{i,j}-a_{i,j}, -1)$; this only takes $O(1)$ time. Therefore, the total time to process line 4 in \autoref{alg:main} is $O(\sum_{i=2}^k |E(H_i)|) = O(|E(H)|) = O(n)$, due to $|E(H)| \leq 2n + \alpha$.

    For the remaining procedure, it takes $O(|V(G)|)$ time to rearrange colors, $O(|E(G)|)$ time to find all edges from vertices of color $1$, and $O(|V(H_i)| + |E(H_i)|)$ time for lines 5 and 6 in \autoref{alg:main}. Here, since $|E(G)| \leq 4n+2\alpha$ (\autoref{lem:max-edges}) and $|E(H_i)| \leq 2n+\alpha$, it takes $O(n)$ time in total.
\end{proof}

\begin{proof}[Proof of \autoref{thm:main}]
    In \autoref{alg:main}, each loop is executed in $O(n)$ time (\autoref{lem:loop-time}) and only $O(k \log n)$ loops will occur (\autoref{lem:num-loops}; note that we can assume $k \leq n$ since the $k > n$ case is trivial). Therefore, \autoref{alg:main} terminates in $O(kn \log n)$ time. Also, due to the loop termination condition, \autoref{alg:main} correctly computes a coloring $\varphi$ such that each color is used on at most $\left\lceil \frac{n+\alpha+2}{k-2} \right\rceil$ vertices.
\end{proof}

\section{Consequences} \label{sec:consequences}

In this section, we discuss the consequences of \autoref{thm:main} for planar graphs and other surfaces.

\subsection{Planar Graphs}

Every bipartite planar graph with $n \geq 3$ vertices has at most $2n-4$ edges (\autoref{prop:bipartite-max}). Hence, for a planar graph $G$, all its bipartite subgraphs $H$ with $|V(H)| \geq 3$ satisfy $|E(H)| \leq 2 |V(H)| - 4$. Therefore, we can use \autoref{thm:main} with $\alpha = -4$. Moreover, by the four-color theorem \cite{AH77}, the planar graph $G$ is guaranteed to be $k$-colorable for $k \geq 4$, and the recent result by \cite{IKM+26} shows that a $4$-coloring of $G$ can be computed in $O(n \log n)$ time. Therefore, the following result holds:

\begin{corollary}
    For $k \geq 4$, every planar graph with $n \geq 3$ vertices admits a $k$-coloring such that each color is used on at most $\left\lceil \frac{n-2}{k-2} \right\rceil$ vertices, and such a coloring can be computed in $O(kn \log n)$ time.
\end{corollary}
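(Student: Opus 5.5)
The plan is to apply \autoref{thm:main} with $\alpha = -4$, so we have to check its two hypotheses for a planar graph $G$ with $n \geq 3$ vertices and $k \geq 4$.

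First, the edge bound. Every bipartite subgraph $H$ of $G$ is itself a bipartite planar graph. By \autoref{prop:bipartite-max}, if $|V(H)| \geq 3$ then $|E(H)| \leq 2|V(H)| - 4 = 2|V(H)| + \alpha$ with $\alpha = -4$, and $\alpha \geq -4$ holds trivially.

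Second, $k$-colorability, together with an explicit starting coloring. The four-color theorem \cite{AH77} makes $G$ $4$-colorable, and the algorithm of \cite{IKM+26} produces a $4$-coloring $\varphi$ in $O(n \log n)$ time. Since $k \geq 4$, this $\varphi$ is also a valid $k$-coloring, with colors $5, \dots, k$ simply unused. So $G$ is $k$-colorable and we have a concrete $\varphi : V \to \{1, \dots, k\}$ to pass to \autoref{thm:main}.

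Applying \autoref{thm:main} then yields a $k$-coloring in which every color class has size at most
\[
\left\lceil \frac{n+\alpha+2}{k-2} \right\rceil = \left\lceil \frac{n-2}{k-2} \right\rceil .
\]
This takes $O(kn\log n)$ time. Adding the $O(n\log n)$ preprocessing step keeps the total at $O(kn \log n)$, since $k \geq 4$.

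The argument is essentially bookkeeping, and I expect no real obstacle. The only points to watch are that the theorem requires a $k$-coloring as input, so one must remark that a $4$-coloring counts as a $k$-coloring, and that the bipartite-edge condition is needed only for subgraphs with at least $3$ vertices, which is exactly what \autoref{prop:bipartite-max} covers. For $k = 4$ the bound $\lceil (n-2)/2 \rceil$ is less than $n/2$, which recovers the main result.
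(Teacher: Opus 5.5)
Your proposal is correct and follows the paper's argument: you verify the bipartite edge bound with $\alpha=-4$ via \autoref{prop:bipartite-max}, take an initial $4$-coloring from the four-color theorem and the $O(n\log n)$ algorithm of \cite{IKM+26}, and invoke \autoref{thm:main}. Your remarks that a $4$-coloring is a valid $k$-coloring and that the preprocessing cost is absorbed into $O(kn\log n)$ spell out details the paper leaves implicit.
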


This bound exactly matches the lower bound of $\left\lceil \frac{n-2}{k-2} \right\rceil$, achieved with a planar graph $K_{1,1,n-2}$ (\autoref{fig:lowerbound}). In particular, for $k = 4$, every planar graph with $n \geq 3$ vertices admits a 4-coloring such that each color is used on fewer than half of the vertices, and such a coloring can be computed in $O(n \log n)$ time. This is itself stronger than the four-color theorem.

\subsection{Results for Other Surfaces}

The result can also be applied to general surfaces, e.g., the torus and the projective plane. For a surface, the Euler characteristic $\chi$ is defined, where every connected graph $G = (V, E)$ embedded on this surface satisfies $|V| - |E| + |F| \geq \chi$, with $F$ denoting the set of faces. For example, $\chi = 2$ for the plane, $\chi = 1$ for the projective plane, and $\chi = 0$ for the torus and the Klein bottle.

\begin{proposition}[\cite{MT01}] \label{prop:bipartite-max-surface}
    Every bipartite graph with $n \geq 3$ vertices on a connected surface with Euler characteristic $\chi$ has at most $2n-2\chi$ edges.
\end{proposition}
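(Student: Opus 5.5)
This is a standard fact from topological graph theory, cited from \cite{MT01}. The plan is to prove it by an Euler-formula face-counting argument, mirroring the planar case of \autoref{prop:bipartite-max}.

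First we reduce to connected graphs. Suppose the bipartite graph $G$ has components $G_1,\dots,G_c$. If each component with $n_j \geq 3$ vertices satisfies the bound, we need to check that the total is at most $2n-2\chi$. A component with $1$ or $2$ vertices has at most $n_j - 1 \le 2n_j - 2$ edges. Note that $2n_j - 2\chi \geq 2n_j - 4$ since $\chi \leq 2$. When $\chi \le 0$, summing the per-component bounds is immediate, because each component contributes at most $2n_j$ or $2n_j - 2\chi$, with $-2\chi \ge 0$. When $\chi > 0$, summing does not work directly. Instead, we embed all components together on the surface. The cleaner route is to add edges joining components while keeping the graph bipartite and embedded, for instance by connecting two vertices on a common face with appropriate parities. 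This can always be arranged, possibly by subdividing, or simply by working with the disconnected Euler inequality $|V|-|E|+|F| \geq \chi + (c-1) \ge \chi$. So we will just work with the whole embedded graph and use the Euler inequality stated in the paper, extended to disconnected graphs, which is harmless since extra components only increase $|V|-|E|+|F|$.

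Second, we apply face counting. Since $G$ is bipartite and simple, it has no odd cycles and no multi-edges, so every face boundary walk has length at least $4$. The exception is a degenerate face arising when the graph is a tree-like piece, whose boundary walk still has length at least $4$ once $n \geq 3$ and there is at least one edge, since a path on $3$ vertices already gives a walk of length $4$. Summing boundary walk lengths gives $4|F| \le 2|E|$, so $|F| \le |E|/2$. Plugging this into $n - |E| + |F| \ge \chi$ yields $n - |E|/2 \ge \chi$, that is, $|E| \le 2n - 2\chi$. If $G$ has no edges, the bound holds trivially, since $2n - 2\chi \ge 2n-4 > 0$ for $n\ge 3$.

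The main obstacle is the face-length step for non-cellular embeddings or faces whose boundary is not a single cycle. Here I would use the boundary walk length (each edge counted twice in total) and argue that any face of walk length $\le 3$ forces a loop, a multi-edge, a triangle, or a graph with at most one edge, which forces $n\le 2$ in the connected case. A single-edge component inside a larger graph does not create a short face by itself unless it is the whole graph. For non-cellular embeddings, the Euler relation holds as the inequality $\ge \chi$, which is exactly the form given in the paper, so the derivation goes through unchanged.
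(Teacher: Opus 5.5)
Your main line is correct and is the standard argument behind this result; the paper gives no proof of its own and simply cites \cite{MT01}. The inequality $n-e+f\ge \chi+c-1\ge\chi$ for the whole (possibly disconnected) embedded graph, combined with $4f\le 2e$, gives $e\le 2n-2\chi$.

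Two side claims need fixing.

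First, your claim that for $\chi\le 0$ summing per-component bounds is immediate is false when $\chi<0$. Summing $e_j\le 2n_j-2\chi$ over $c'\ge 2$ components with at least three vertices gives only $e\le 2n-2c'\chi$, which is weaker than $2n-2\chi$. Drop the case split and use the global inequality for every $\chi$. To justify that inequality: because the surface is connected, some face meets two components. An arc drawn inside that face joining them raises $e$ by one, lowers $c$ by one, and leaves $f$ unchanged. The added edges need not keep the graph bipartite, since the face count is applied to $G$ itself.

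Second, the total boundary length of every face is at least $4$ only when $e\ge 2$; $n\ge 3$ and $e\ge 1$ are not enough. For example, $K_2\cup K_1$ has a single face of length $2$. The case $e\le 1$ is immediate anyway, because $2n-2\chi\ge 2$.
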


By \autoref{prop:bipartite-max-surface}, we can use \autoref{thm:main} with $\alpha = -2\chi$; since $\chi \leq 2$ holds for every connected surface, the assumption $\alpha \geq -4$ is automatically satisfied. Then, we obtain the following result. In particular, for a closed orientable surface, its Euler characteristic is expressed as $\chi = 2 - 2g$ where $g$ is its genus, meaning the threshold in the corollary below becomes $\left\lceil \frac{n+4g-2}{k-2} \right\rceil$ in this case.

\begin{corollary}
    For $k \geq 3$, given a graph with $n \geq 3$ vertices on a fixed connected surface with Euler characteristic $\chi$, together with its $k$-coloring $\varphi: V \to \{1, \dots, k\}$, one can obtain a $k$-coloring such that each color is used on at most $\left\lceil \frac{n-2\chi+2}{k-2} \right\rceil$ vertices, in $O(kn \log n)$ time.
\end{corollary}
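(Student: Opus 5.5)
This corollary follows by instantiating \autoref{thm:main} with $\alpha = -2\chi$. The work is in checking the hypotheses of \autoref{thm:main}; the conclusion then transfers verbatim, because $n+\alpha+2 = n-2\chi+2$.

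\emph{The inequality $\alpha \geq -4$.} Every connected surface has Euler characteristic $\chi \leq 2$. The sphere attains $\chi = 2$, and every other closed surface has strictly smaller $\chi$. Hence $\alpha = -2\chi \geq -4$.

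\emph{$k$-colorability.} This is immediate, since a proper $k$-coloring $\varphi$ is part of the input.

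\emph{The edge-density condition on bipartite subgraphs.} Let $H$ be a bipartite subgraph of $G$ with $|V(H)| \geq 3$. Restricting the embedding of $G$ gives an embedding of $H$ on the same surface. \autoref{prop:bipartite-max-surface} then yields $|E(H)| \leq 2|V(H)| - 2\chi = 2|V(H)| + \alpha$.

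This is the one place that needs care. \autoref{prop:bipartite-max-surface} must hold for graphs that are possibly disconnected, and whose embeddings are not necessarily cellular. I would justify this as follows. Add edges between components to make $H$ connected; each added edge keeps $H$ bipartite, and adding edges only makes the bound harder to satisfy. Then use Euler's inequality $|V|-|E|+|F| \geq \chi$ together with the fact that every face of a bipartite simple graph with at least $3$ vertices has boundary length at least $4$. Double counting gives $2|E| \geq 4|F|$, and substituting this into Euler's inequality gives the bound. A small component such as a single edge can be merged in by adding edges; this is harmless because it only increases $|E|$.

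With all hypotheses verified, \autoref{thm:main} returns a $k$-coloring in which each color class has size at most $\left\lceil \frac{n-2\chi+2}{k-2} \right\rceil$, in $O(kn\log n)$ time. For an orientable surface of genus $g$ we have $\chi = 2-2g$, so this threshold becomes $\left\lceil \frac{n+4g-2}{k-2} \right\rceil$. I expect no real obstacle. The only point worth stating explicitly is that the proposition applies to every bipartite subgraph, not just to $G$, and this holds because the class of graphs embeddable on a fixed surface is closed under taking subgraphs.
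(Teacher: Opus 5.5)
Your proposal is correct and follows the paper's argument: instantiate \autoref{thm:main} with $\alpha=-2\chi$ (a constant, since the surface is fixed). You use $\chi\le 2$ to get $\alpha\ge -4$, and \autoref{prop:bipartite-max-surface}, applied to subgraphs that inherit the embedding, for the bipartite edge bound. Your extra justification of that proposition for disconnected or non-cellular embeddings is sound and goes beyond what the paper writes out.
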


Finally, we note that the result above holds beyond the surface's colorability. For example, it is proved that every graph on the torus is $7$-colorable \cite{RY68}, but if $G$ is $4$-colorable, then there exists a $4$-coloring such that each color is used on at most $\left\lceil \frac{n+2}{2} \right\rceil$ vertices.

\section{Conclusion} \label{sec:conclusion}

In this paper, we showed that every planar graph with three or more vertices admits a $4$-coloring in which no color is used on half or more of the vertices. This result is shown in the form of \autoref{thm:main}, which allows us to generalize to five or more colors and other surfaces. We consider that our result is particularly meaningful when the number of colors $k$ is small. This is because, if $k$ is large enough, the following reduction works:

\begin{enumerate}
    \item If a given $\varphi$ is a coloring where each color is used on at most $\left\lceil \frac{n+\alpha+2}{k-2} \right\rceil$ vertices, just output $\varphi$.
    \item Otherwise, one can obtain an independent set $S \subseteq V$ of size exactly $\left\lceil \frac{n+\alpha+2}{k-2} \right\rceil$ from the largest color class in $\varphi$, and then it remains to find a $(k-1)$-coloring of $G[V \setminus S]$ where each color is used on at most $\left\lceil \frac{n+\alpha+2}{k-2} \right\rceil$ vertices.
\end{enumerate}

This reduction applies only if $G[V \setminus S]$ is $(k-1)$-colorable and its initial $(k-1)$-coloring can be computed efficiently. This is true when $k \geq 5$ for planar graphs, due to an $O(n \log n)$-time algorithm to find a $4$-coloring of planar graphs \cite{IKM+26}. Therefore, $k = 4$ is the essentially important case for planar graphs. Likewise, for general graphs that meet the condition of \autoref{thm:main}, they are \emph{typically} $9$-degenerate by \autoref{lem:max-edges}, and such graphs are $10$-colorable in polynomial time. For these reasons, the essential cases are when $k$ is small, e.g., $k \leq 10$.

Finally, we believe that this paper has some implications beyond this result. First, while the four-color theorem is notoriously complex, one can obtain a meaningful generalization by treating it as a ``black box.'' Second, while Kempe chains have often been used for induction (for the four-color theorem) or for reconfiguration problems, they can also be used as an optimization algorithm in graph coloring. We conclude this paper by posing the following conjectures:

\begin{conjecture}
    Given a planar graph with $n \geq 3$ vertices, one can compute a 4-coloring in which each color is used on fewer than $n/2$ vertices in $O(n)$ time.
\end{conjecture}

\begin{conjecture} \label{conj:grotzsch}
    Every triangle-free planar graph with $n \geq 2$ vertices admits a 3-coloring in which each color is used on at most $n/2$ vertices (a possible extension of Gr\"{o}tzsch's theorem \cite{Gro59}).
\end{conjecture}

Particularly for \autoref{conj:grotzsch}, the example in \autoref{fig:grotzsch} that uses the Gr\"{o}tzsch graph (as a $4$-critical graph) is an important graph that every $3$-coloring has a color that is used on at least $n-10$ vertices. This graph can be embedded in a projective plane, which implies that the conjecture would likely not hold for surfaces other than the plane.

\section*{Acknowledgments}

The authors thank Yuta Inoue, Atsuyuki Miyashita, and Tonan Kamata for helpful discussions, and Akihito Yoneyama for helpful suggestions to improve this paper.

\printbibliography

\end{document}